\documentclass[journal]{IEEEtran}

\usepackage{amsmath,amsfonts,amssymb,amsthm}
\usepackage{mathrsfs}
\usepackage{comment,blkarray}
\usepackage{multirow,bigdelim}
\usepackage{cite}
\usepackage{tikz}
\usetikzlibrary{arrows,automata}
\usepackage{verbatim}
\usepackage{graphicx}
\usepackage{cases}	
\usepackage{booktabs}
\usepackage{caption}												
\usepackage{etoolbox}

\usepackage{enumitem}
\usepackage{mathtools}
\usepackage{algorithm}
\usepackage{algorithmic}

\newtheorem{thm}{Theorem}
\newtheorem{lemma}{Lemma}
\newtheorem{cor}{Corollary}

\theoremstyle{definition}
\newtheorem{example}{Example}
\newtheorem{defn}{Definition}
\newtheorem{remark}{Remark}
\allowdisplaybreaks

\DeclareMathOperator{\lcm}{lcm}

\begin{document}

\title{Exact Dynamic Range--Robustness Tradeoff for Chinese Remainder Theorem under Non-Uniformly Bounded Remainder Errors}

\author{Guangpu Guo, \IEEEmembership{Graduate Student Member}, \IEEEmembership{IEEE}, and Xiang-Gen Xia, \IEEEmembership{Fellow}, \IEEEmembership{IEEE} 
       
\thanks{G. Guo and X.-G. Xia are with the Department of Electrical and Computer Engineering,
  University of Delaware, Newark, DE 19716, USA
  (e-mails: guangpu@udel.edu and xxia@ee.udel.edu).
  This work was supported in part by the National Science Foundation (NSF) under Grant CCF-2246917.
}
}

\maketitle

\begin{abstract}
The Chinese remainder theorem (CRT) is highly sensitive to remainder errors.
Robust CRT addresses this problem by recovering the folding integers correctly from erroneous remainders, so that the final reconstruction error is bounded by the remainder error level. 
The existing dynamic range--robustness tradeoff results mainly assume a uniform remainder error bound, where all remainders share one scalar error bound. 
This paper studies the exact tradeoff when the remainder error bound is a vector in the sense that different remainders may have different error bounds. 
For any fixed candidate range length, which specifies the integer interval over which an unknown integer is to be determined from its erroneous remainders, and any fixed remainder error bound vector, we derive a scalar necessary and sufficient condition for robust determination.
Based on this condition, we characterize the largest admissible range length, called the dynamic range, for a given remainder error bound
vector. 
Conversely, for a given range length, we characterize all remainder error bound vectors under which remainder errors can be tolerated. 
We also give an exact folding integer vector decoding algorithm. 
We show that the proposed results reduce to the classical CRT in the error-free case and coincide with the known result when all remainder error bounds are equal. 
Numerical results show that the dynamic range depends on the full remainder error bound vector. 
They also show that the remainders are coupled in the robustness, which cannot be captured by a single uniform error bound.
\end{abstract}

\begin{IEEEkeywords}
Chinese remainder theorem (CRT), robust CRT, dynamic range, non-uniform remainder errors, folding integers.
\end{IEEEkeywords}

\section{Introduction}

The Chinese remainder theorem (CRT) is a classical result for reconstructing an integer from its remainders modulo several positive integer moduli. It has been widely used in number theory, coding theory, cryptography, and signal processing \cite{McClellanRader79,DingPeiSalomaa99,radar_book}. In the error-free case, a nonnegative integer can be uniquely determined over an integer interval whose length is the least common multiple (lcm) of all the moduli. This largest guaranteed interval length is called the \emph{dynamic range}. In this paper, a \emph{range length} means the length of a candidate integer interval over which an unknown integer is to be robustly determined from erroneous remainders.

A major problem of the classical CRT in applications, such as signal processing, is its sensitivity to remainder errors. A small error in one remainder may lead to a large reconstruction error. This issue appears naturally when remainders are obtained from noisy or unreliable observations, for example in phase unwrapping \cite{Xia99,XiaWang07,LiLiangXia09,YangWangXiaYin14,AkhlaqMcKilliamSubramanian15,LiWangWangMoran13}, distributed storage \cite{ChessaMaestrini12}, remainder coding \cite{BarsiMaestrini73,GoldreichRonSudan00,GuruswamiSahaiSudan00,ShparlinskiSteinfeld04,XiaoXia15Polynomial,XiaoHuangYeXiao18}, fault-tolerant analog neural network training \cite{DemirkiranNairBunandarJoshi24,DemirkiranYangBunandarJoshi24}, and multi-channel modulo sampling and modulo analog-to-digital converters (ADCs) \cite{GanLiu20,GongGanLiu21,YanLiGanLiuLi26,GuoBhandari26AsyncMCUSF}. Robust CRT addresses this problem by correctly determining the folding integers from erroneous remainders. Once the folding integers are correct, the final reconstruction error is bounded by the remainder error level \cite{XiaWang07,LiLiangXia09,WangXia10,XiaoXiaWang14}.

Early robust CRT results mainly keep the dynamic range equal to the classical CRT dynamic range and study the largest tolerable remainder error level. Closed-form reconstruction algorithms and exact conditions were obtained in the common-factor co-prime setting \cite{WangXia10}, and later extended to general moduli and multi-stage robust CRT \cite{XiaoXiaWang14}. Beyond the single integer case, CRT and robust CRT have also been extended to real-valued and complex-valued unknowns \cite{WangLiWangXia15,xiaopingli}, multiple integers \cite{Xia99,XiaoHuangYeXiao18,liao07,li16,XiaoXiao19}, and multidimensional integer vectors \cite{XiaoXiaWang20,XiaoHuoXia24,GuoXia25a,GuoXia25b,mstage-mdcrt}. 

A different and fundamental question arises when the dynamic range is allowed to change. A larger dynamic range allows more integers to be reconstructed, but it also makes the remainder patterns denser and leaves less room for remainder errors. Thus, robustness can be improved by introducing redundancy, or equivalently by creating more separation among remainder patterns. This can be done either by using redundant moduli, such as moduli with a common factor \cite{XiaWang07,WangXia10,XiaoXiaWang14}, or by keeping the moduli fixed and reducing the dynamic range considered for reconstruction \cite{Parhami15,XiaoXiaHuo17,XiaoHuangYeXiao18,YanGanLiuHu25}. In particular, pairwise co-prime moduli may still provide robustness when the considered dynamic range is reduced, even if every remainder is allowed to contain a bounded error.

The idea of reducing dynamic range is related to, but different from, remainder error correction codes based on redundant residue number systems (RRNS). In a typical RRNS code with $n$ pairwise co-prime moduli and $k<n$, the message range is chosen as the product of the $k$ smallest moduli, and the remaining remainders provide redundancy so that the message can be recovered from any known set of at least $k$ error-free remainders \cite{BarsiMaestrini73,GoldreichRonSudan00,GuruswamiSahaiSudan00,ShparlinskiSteinfeld04}, where the errors in the remaining remainders can be arbitrary. In contrast, this paper studies robust determination in the robust CRT sense, where every remainder may be erroneous, but each error is bounded. 

For two-modulus systems, the dynamic range--robustness tradeoff was first investigated through a position representation on the two-dimensional remainder plane \cite{Parhami15}, and exact characterization and closed-form reconstruction algorithms were later derived \cite{XiaoXiaHuo17}. For general moduli, a geometric framework based on a pseudo metric was proposed to exactly characterize the tradeoff under a uniform remainder error bound \cite{XiaoHuangYeXiao18}. More recently, this tradeoff has also been studied from the viewpoint of moduli design \cite{YanGanLiuHu25}.

The existing dynamic range--robustness tradeoff results are mainly formulated under a uniform remainder error model, where all remainders share the same error bound. This assumption may be restrictive in applications, since different remainder channels may have different noise levels or reliability requirements, for example in distributed storage, multi-channel sampling, multi-wavelength sensing, and modulo ADCs. A single uniform bound may either underuse reliable channels or overstate the tolerance of unreliable ones. Therefore, a natural question is to characterize the tradeoff when the remainder error bound is a vector rather than a scalar, that is, when different remainder errors may have different bounds (levels). This leads to the two central questions of this paper. For a given remainder error bound vector, what is the dynamic range for robust recovery? Conversely, for a given range length, which remainder error bound vectors can be tolerated? This paper gives exact answers to both questions.
 
We adopt a discrete remainder error model, where the observed remainders and remainder errors are integer-valued. This model is natural in residue number systems and Chinese remainder codes, and it also appears when continuous measurements are quantized or detected into discrete bins, such as frequency bins or range bins. Hence the remainder error bounds are represented by an integer vector.

To answer the above questions, we view robust CRT as an exact folding integer vector decoding problem. Here, a folding integer vector is the integer vector formed by the folding integers associated with the different moduli. For a fixed range length and a fixed remainder error bound vector, we collect all possible observed remainder vectors that can be produced by integers with the same folding integer vector. Robust determination is possible if and only if the collections corresponding to different folding integer vectors are disjoint. From this observation, we derive a scalar necessary and sufficient condition for robust determination at any fixed range length and remainder error bound vector. This condition is the main tool for characterizing both the dynamic range for a given remainder error bound vector and the tolerable remainder error bound vectors for a given range length.

For a given remainder error bound vector, we show that the dynamic range is determined by the earliest ambiguity that appears as the range length increases. Here, an ambiguity means that two integers with different folding integer vectors can generate the same erroneous remainder vector under the allowed remainder error levels. For a fixed range length, we characterize all tolerable remainder error bound vectors through a finite set of boundary vectors. These boundary vectors are the minimal remainder error bound vectors that can cause an ambiguity. Hence, a remainder error bound vector is tolerable if and only if it is not componentwise larger than or equal to any boundary vector.

We also give an exact decoding algorithm for the folding integer vector. Given an erroneous remainder vector, the proposed algorithm searches for a candidate integer in the tested range that could have generated the observation under the given remainder error bounds. When the proposed necessary and sufficient condition for robust determination holds, all the candidate integers that can generate the same observation have the same folding integer vector, and hence the algorithm returns the desired folding integer vector.

We further discuss special cases to connect the proposed tradeoff with the existing CRT and robust CRT results. In the error-free case, the dynamic range reduces to the classical CRT dynamic range. In the uniform remainder error bound case, our result coincides with the known uniform tradeoff in \cite{XiaoHuangYeXiao18}. Numerical results verify the decoding algorithm, illustrate that pairwise co-prime moduli can still provide robustness over a reduced dynamic range, and show that the full remainder error bound vector is needed. In particular, vectors with the same largest component may have different dynamic ranges, and the tolerable remainder error bounds in different remainder channels are coupled.

The rest of this paper is organized as follows. Section~\ref{s2} gives the problem setup. Section~\ref{s3} develops the exact dynamic range--robustness tradeoff. Section~\ref{s4} presents the exact folding integer vector decoding algorithm. Section~\ref{s5} discusses special cases and connections with the known robust CRT results. Section~\ref{s6} presents numerical results. Section~\ref{s7} concludes the paper.

\emph{Notation:}
Let $\mathbb Z$, $\mathbb Z_{\ge0}$, and $\mathbb Z_{>0}$ denote the sets of integers, nonnegative integers, and positive integers, respectively.
For integers $a$ and $b$, let $[a,b]_{\mathbb Z}=\{a,a+1,\ldots,b\}$, with the convention that $[a,b]_{\mathbb Z}=\emptyset$ if $a>b$. 
For a real number $x$, $[x]$ denotes a rounding integer satisfying $-\frac{1}{2} \le x-[x] < \frac{1}{2}$, while $\lfloor x\rfloor$ and $\lceil x\rceil$ denote the floor and ceiling of $x$, respectively. 
Vectors are regarded as column vectors. 
We use $\mathbf 0$ and $\mathbf 1$ for the all-zero and all-one vectors of proper dimensions. 
For $\boldsymbol{\tau},\boldsymbol{\eta}\in\mathbb Z_{\ge0}^L$, we write $\boldsymbol{\tau}\preceq\boldsymbol{\eta}$ if $\tau_i\le\eta_i$ for every $i=1,\ldots,L$. 
Equivalently, $\boldsymbol{\eta}$ is said to dominate $\boldsymbol{\tau}$.

\section{Problem Setup}\label{s2}

Let $m_1,\ldots,m_L$ be arbitrary positive integers, called integer moduli, and denote their greatest common divisor (gcd) and least common multiple (lcm) by
\begin{equation}\label{eq:gcd_lcm}
m=\gcd(m_1,\ldots,m_L),
\qquad
M=\lcm(m_1,\ldots,m_L).
\end{equation}
For any integer $N\in\mathbb Z$, there exist unique integers
$n_i\in\mathbb Z$ and $r_i\in[0,m_i-1]_{\mathbb Z}$ such that
\begin{equation}\label{eq:cong}
N=n_i m_i+r_i,
\qquad
i=1,\ldots,L.
\end{equation}
Here, $n_i$ is called the \emph{folding integer}, and $r_i$ is called the
\emph{remainder} of $N$ modulo $m_i$.
By the classical CRT, an integer $N$ can be uniquely determined from its error-free remainders $r_1,\ldots,r_L$ modulo $m_1,\ldots,m_L$, respectively, if and only if $N\in [0,M-1]_{\mathbb Z}$.
This interval has length $M$, and no longer interval can guarantee unique determination. In this case, $M$ is called the \emph{dynamic range}.

We next introduce the remainder error model. In practice, the remainders may not be obtained exactly. Instead, for each modulus $m_i$, $i=1,\ldots,L$, we observe an erroneous remainder
\begin{equation}\label{eq:error_model}
\widetilde r_i=r_i+\Delta r_i \in[0,m_i-1]_{\mathbb Z},
\qquad
\Delta r_i\in\mathbb Z,
\end{equation}
where $\Delta r_i$ denotes the error in the $i$-th remainder.

In this paper, we use a discrete remainder error model. That is, the observed remainders $\widetilde r_i$ are also integer-valued, so the error $\Delta r_i$ is an integer. This model is natural when remainders are represented as digital symbols or detected indices. For example, in residue number systems and Chinese remainder codes, each remainder coordinate is an integer. In many signal processing applications, a continuous measurement is first quantized or detected into a bin, such as a frequency bin or range bin. The resulting remainder estimate is integer-valued, and the corresponding remainder error is an integer. Furthermore, in the case of fractional numbers, one can always multiply them by a common factor to make the results all integers.

Different remainders may have different error bounds. Let
\begin{equation}\label{eq:tau_vector}
\boldsymbol{\tau}=(\tau_1,\ldots,\tau_L)\in\mathbb Z_{\ge 0}^L
\end{equation}
be a remainder error bound vector:
\begin{equation}\label{eq:nonuniform_error_bound}
|\Delta r_i|\le \tau_i,
\qquad
i=1,\ldots,L.
\end{equation}

For an unknown integer $N$, we say that $N$ can be \emph{robustly determined} from the erroneous remainders $\widetilde r_1,\ldots,\widetilde r_L$ if the folding integers $n_1,\ldots,n_L$ in \eqref{eq:cong} can be correctly determined.
This definition follows the standard robust CRT viewpoint \cite{XiaWang07,LiLiangXia09,WangXia10,XiaoXiaWang14}. Once the folding integers are correctly determined, one can estimate $N$ by
\begin{equation}\label{eq:N_hat_average}
\widehat N
=
\left[
\frac{1}{L}\sum_{i=1}^{L}(\widehat n_i m_i+\widetilde r_i)
\right],
\end{equation}
where $\widehat n_i=n_i$ for $i=1,\ldots,L$. Then, naturally,
\begin{equation}\label{eq:error_N}
|\widehat N-N|
\le
\left[ \frac{1}{L}\sum_{i=1}^{L}\tau_i \right].
\end{equation}
Thus, once the folding integers are correctly determined, bounded remainder errors lead to a bounded reconstruction error. Therefore, the essential task in robust recovery here is to determine the folding integers correctly.

\begin{remark}[Folding integer recovery versus exact integer recovery]
The robust determination considered in this paper should not be confused with exact decoding in Chinese remainder codes. Here we do not require the erroneous remainders to identify the exact unknown integer $N$ without error. We only require them to identify the folding integers correctly, which is the standard robust CRT viewpoint.
\end{remark}

We now formally define the dynamic range in terms of a remainder error bound vector.

\begin{defn}[Dynamic range]\label{defn:dynamic_range}
For a given remainder error bound vector $\boldsymbol{\tau}$ defined in \eqref{eq:tau_vector}, the \emph{dynamic range} under $\boldsymbol{\tau}$, denoted by $D(\boldsymbol{\tau})$, is the largest positive integer $D$ such that every integer $N\in[0,D-1]_{\mathbb Z}$ can be robustly determined, i.e., its folding integers $n_1,\ldots,n_L$ in \eqref{eq:cong} can be correctly determined, from the erroneous remainders $\widetilde r_1,\ldots,\widetilde r_L$ for the remainder errors satisfying \eqref{eq:nonuniform_error_bound}.
\end{defn}

Consider the componentwise partial order $\preceq$ on $\mathbb Z_{\ge 0}^L$. If $\boldsymbol{\tau}\preceq\boldsymbol{\eta}$, then, clearly we have
$D(\boldsymbol{\eta})\le D(\boldsymbol{\tau})$.
Also, for any remainder error bound vector
$\boldsymbol{\tau}\in\mathbb Z_{\ge 0}^L$, we have
$D(\boldsymbol{\tau})\le M$, because $\mathbf 0\preceq\boldsymbol{\tau}$ and the zero-error case has dynamic range $M$.

In the following, a \emph{range length} means an integer $D$ with
$1\le D\le M$ for which we test the robust determination on
$[0,D-1]_{\mathbb Z}$ using the erroneous remainders $\widetilde r_i$. Such a $D$ is a candidate length, not necessarily the
dynamic range. It is said admissible under $\boldsymbol{\tau}$ if the robust
determination is guaranteed for integers in $[0,D-1]_{\mathbb Z}$ from erroneous remainders $\widetilde r_1,\ldots,\widetilde r_L$ in \eqref{eq:tau_vector}--\eqref{eq:nonuniform_error_bound}. Hence
$D(\boldsymbol{\tau})$ is the largest admissible range length.

The above definition leads to two closely related problems studied in this paper. First, for a given remainder error bound vector $\boldsymbol{\tau}$, we aim to characterize the dynamic range $D(\boldsymbol{\tau})$. In other words, we want to determine the largest integer interval $[0,D(\boldsymbol{\tau})-1]_{\mathbb Z}$ over which robust determination is guaranteed for all remainder errors satisfying \eqref{eq:nonuniform_error_bound}.
Second, for a given range length $D$, we aim to characterize the tolerable remainder error bound vectors $\boldsymbol{\tau}$ such that $D\le D(\boldsymbol{\tau})$. Equivalently, we want to determine how large the componentwise error bounds $\tau_1,\ldots,\tau_L$ can be while still guaranteeing the robust determination of every integer $N\in[0,D-1]_{\mathbb Z}$.
Together, these two problems describe the exact tradeoff between dynamic
range and robustness under bounded remainder errors.

\begin{remark}[Connection with RRNS-based remainder error correction codes]
The tradeoff considered in this paper is related to remainder error correction codes based on RRNS \cite{BarsiMaestrini73,GoldreichRonSudan00,GuruswamiSahaiSudan00,ChessaMaestrini12,ShparlinskiSteinfeld04}.
In a typical RRNS code, one uses $n$ moduli, but chooses the message range so that the message can be recovered from any $k$ error-free remainders for $k<n$.
If recovery is required from any known set $S$ of $k$ error-free remainders, then the message range cannot be larger than $\min_{|S|=k}\lcm\{m_i:i\in S\}$.
For pairwise co-prime moduli, this leads to the choice where the message range is determined by the smallest $k$ moduli, and the remaining remainders provide redundancy for error correction, i.e., error-free recovery.

The setting of this paper is different. We do not assume that a fixed number
of remainders are error-free. Instead, every remainder may contain an error, and
the allowable errors are described by the remainder error bound vector
$\boldsymbol{\tau}$. Therefore, the largest admissible range length $D(\boldsymbol{\tau})$ is not
determined by a fixed number $k$ of reliable remainders but depends on the full
remainder error bound vector $\boldsymbol{\tau}$. 
Also, this paper only considers the robust recovery but not error-free recovery as in error correction coding, although both trade the dynamic range with the error tolerance (robustness and error correction capability).
\end{remark}

\section{Exact Dynamic Range--Robustness Tradeoff}\label{s3}

In this section, we characterize the exact tradeoff between the dynamic range and the remainder error bounds. The main idea is to group the unknown integers according to their folding integer vectors.
For each group, we describe the
set of all possible erroneous remainder vectors that can be generated by
integers in this group, and call it the observation set of the corresponding
folding integer vector.
The robust recovery is possible only when the observation sets corresponding to different folding integer vectors are disjoint.
Based on this viewpoint, we derive an exact separation condition. We then use this condition to characterize both the dynamic range
$D(\boldsymbol{\tau})$ for a given remainder error bound vector $\boldsymbol{\tau}$ and the error bounds that can be tolerated for a given range length $D$.

\subsection{Observation Sets Induced by Folding Integer Vectors}
\label{s3_sub1}

For an integer $N$, its folding integers $n_i$ in \eqref{eq:cong} are
\begin{equation}\label{eq:n_r_components}
n_i(N)=\left\lfloor \frac{N}{m_i}\right\rfloor,
\qquad i=1,\ldots,L.
\end{equation}
Define the \emph{folding integer vector} of $N$ by
\begin{equation}\label{eq:folding_vector}
\mathbf n(N)
=
(n_1(N),\ldots,n_L(N)).
\end{equation}
For a range length $D$ with $1\le D\le M$, define the set of all folding integer vectors of the integers in $[0,D-1]_{\mathbb Z}$ by
\begin{equation}\label{eq:folding_vector_set}
\mathcal F(D)
=
\{\mathbf n(N):N\in[0,D-1]_{\mathbb Z}\}.
\end{equation}

It is important to note that the map $N\mapsto \mathbf n(N)$ is not necessarily one-to-one on $[0,D-1]_{\mathbb Z}$. Indeed, by \eqref{eq:n_r_components}, the component $n_i(N)$ remains unchanged as long as $N$ stays in an integer interval of the form
\begin{equation}
[qm_i,(q+1)m_i-1]_{\mathbb Z},
\qquad q\in\mathbb Z_{\ge0}.
\end{equation}
Since these intervals depend on the modulus $m_i$, several consecutive integers may share the same folding integer vector. Therefore, it is natural to group together all the integers that have the same folding integer vector.

For each $\mathbf n\in\mathcal F(D)$, define the set of integers corresponding to the folding integer vector $\mathbf n$ by
\begin{equation}\label{eq:folding_set_def}
\mathcal I_D(\mathbf n)
=
\{N\in[0,D-1]_{\mathbb Z}:\mathbf n(N)=\mathbf n\}.
\end{equation}
Thus, $\mathcal I_D(\mathbf n)$ contains exactly those integers in $[0,D-1]_{\mathbb Z}$ that have the same folding integer vector $\mathbf n$.

As discussed above, for a fixed folding integer vector $\mathbf n=(n_1,\ldots,n_L)\in\mathcal F(D)$, the condition $\mathbf n(N)=\mathbf n$ is equivalent to
\[
n_i m_i\le N\le (n_i+1)m_i-1,
\qquad i=1,\ldots,L.
\]
Therefore,
\begin{equation}\label{eq:folding_set_intersection}
\mathcal I_D(\mathbf n)
=
[0,D-1]_{\mathbb Z}
\cap
\bigcap_{i=1}^{L}
[n_i m_i,(n_i+1)m_i-1]_{\mathbb Z}.
\end{equation}
Since $\mathbf n\in\mathcal F(D)$, this set is nonempty. Hence it can be written as
\begin{equation}\label{eq:folding_set_interval}
\mathcal I_D(\mathbf n)
=
[\alpha_D(\mathbf n),\beta_D(\mathbf n)]_{\mathbb Z},
\end{equation}
where
\begin{align}
\alpha_D(\mathbf n)
&=
\max_{1\le i\le L} n_i m_i,
\label{eq:alpha_def}\\
\beta_D(\mathbf n)
&=
\min\left\{
D-1,\min_{1\le i\le L}\bigl((n_i+1)m_i-1\bigr)
\right\}.
\label{eq:beta_def}
\end{align}

\begin{example}\label{ex:folding_vector_example}
Let $(m_1,m_2)=(6,10)$ and $D=12$. Then $\mathcal F(12)=\{(0,0),(1,0),(1,1)\}$. The corresponding integer sets are $\mathcal I_{12}(0,0)=[0,5]_{\mathbb Z}$, $\mathcal I_{12}(1,0)=[6,9]_{\mathbb Z}$, and $\mathcal I_{12}(1,1)=[10,11]_{\mathbb Z}$. Fig.~\ref{fig:folding_vector_example} plots the remainder vectors $(r_1(N),r_2(N))$ for $N\in[0,11]_{\mathbb Z}$. The points with the same color share the same folding integer vector. This example shows that several consecutive integers may have the same folding integer vector.

\begin{figure}[htbp]
\centering
\includegraphics[width=0.18\textwidth]{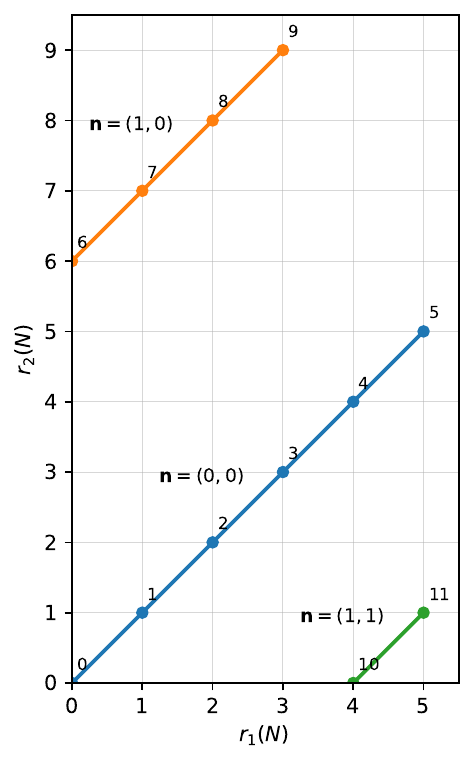}
\caption{Remainder vectors $(r_1(N),r_2(N))$ for $(m_1,m_2)=(6,10)$ and $D=12$.}
\label{fig:folding_vector_example}
\end{figure}

\end{example}
We next describe the possible erroneous remainder vectors generated by a single integer. To simplify the notation, define the set of admissible integer error vectors under the error bound vector $\boldsymbol{\tau}$ by
\begin{equation}\label{eq:admissible_error_set}
\mathcal E_{\boldsymbol{\tau}}
=
\left\{
\mathbf e=(e_1,\ldots,e_L)\in\mathbb Z^L:
|e_i|\le \tau_i,\ i=1,\ldots,L
\right\}.
\end{equation}

For any integer $N$, its $i$-th true remainder is
\[
r_i(N)=N-n_i(N)m_i,
\qquad i=1,\ldots,L.
\]
Equivalently, the true \emph{remainder vector} of $N$ is
\begin{equation}\label{eq:true_remainder_vector}
\mathbf r(N)
=
(r_1(N),\ldots,r_L(N))
=
N\mathbf 1-\mathbf m\odot\mathbf n(N),
\end{equation}
where $\mathbf m=(m_1,\ldots,m_L)$, and
$\mathbf m\odot\mathbf n(N)=(m_1n_1(N),\ldots,m_Ln_L(N))$.
Under the remainder error model \eqref{eq:error_model}--\eqref{eq:nonuniform_error_bound}, the observed remainder vector generated by $N$ has the form
\begin{equation}\label{eq:observed_remainder_vector_single_N}
\widetilde{\mathbf r}
=
\mathbf r(N)+\mathbf e
=
N\mathbf 1-\mathbf m\odot\mathbf n(N)+\mathbf e,
\qquad
\mathbf e\in\mathcal E_{\boldsymbol{\tau}}.
\end{equation}

Now fix a range length $D$ with $1\le D\le M$ and an error bound vector $\boldsymbol{\tau}$. For each folding integer vector $\mathbf n\in\mathcal F(D)$, we define the observation set generated by $\mathbf n$ as
\begin{equation}\label{eq:observation_set_def}
\begin{aligned}
\mathcal O_{\boldsymbol{\tau},D}(\mathbf n)
=
\{\,\widetilde{\mathbf r}\in\mathbb Z^L:
&\ \widetilde{\mathbf r}=\mathbf r(N)+\mathbf e,\\
&\ N\in\mathcal I_D(\mathbf n),\
\mathbf e\in\mathcal E_{\boldsymbol{\tau}}\,\}.
\end{aligned}
\end{equation}
By \eqref{eq:error_model}, an actually observable remainder vector must also belong to $\mathcal A\triangleq\prod_{i=1}^{L}[0,m_i-1]_{\mathbb Z}$.
Hence the physically admissible observation set is $\mathcal O_{\boldsymbol{\tau},D}(\mathbf n)\cap\mathcal A$.
As will be shown in the following lemma, restricting the observation vectors to $\mathcal A$ does not affect the pairwise disjointness condition. 
Therefore, for simplicity, we continue to use $\mathcal O_{\boldsymbol{\tau},D}(\mathbf n)$ in \eqref{eq:observation_set_def} as the observation set in the subsequent analysis.

We now connect the observation sets with robust recovery. By the definition in Section~\ref{s2}, robust determination means that the folding integers, or equivalently the folding integer vector, can be correctly determined from the erroneous remainders. Therefore, for fixed $D$ and $\boldsymbol{\tau}$, every possible observation $\widetilde{\mathbf r}$ must identify a unique folding integer vector.
Equivalently, the observation sets corresponding to different folding integer vectors must be disjoint.

\begin{lemma}\label{lem:observation_set_criterion}
For a given $1\le D\le M$, every integer $N\in[0,D-1]_{\mathbb Z}$ can be robustly determined under the remainder error bound vector $\boldsymbol{\tau}$ if and only if
\begin{equation}\label{eq:observation_set_disjoint}
\mathcal O_{\boldsymbol{\tau},D}(\mathbf n)
\cap
\mathcal O_{\boldsymbol{\tau},D}(\mathbf n')
=
\emptyset
\end{equation}
for every two distinct folding integer vectors $\mathbf n,\mathbf n'\in\mathcal F(D)$.
\end{lemma}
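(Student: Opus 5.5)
The plan is to prove the two directions separately and to handle the restriction to $\mathcal A=\prod_{i=1}^{L}[0,m_i-1]_{\mathbb Z}$ by a coordinatewise clipping argument. Throughout, robust determination means the following: there is a decoding map $\widetilde{\mathbf r}\mapsto\widehat{\mathbf n}$, defined on physically observable vectors, with $\widehat{\mathbf n}=\mathbf n(N)$ whenever $N\in[0,D-1]_{\mathbb Z}$ and $\widetilde{\mathbf r}=\mathbf r(N)+\mathbf e\in\mathcal A$ for some $\mathbf e\in\mathcal E_{\boldsymbol{\tau}}$.

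\emph{Sufficiency.} Assume \eqref{eq:observation_set_disjoint} holds for all distinct $\mathbf n,\mathbf n'\in\mathcal F(D)$. Then each $\widetilde{\mathbf r}\in\bigcup_{\mathbf n\in\mathcal F(D)}\mathcal O_{\boldsymbol{\tau},D}(\mathbf n)$ lies in exactly one observation set. Define $\widehat{\mathbf n}(\widetilde{\mathbf r})$ to be the index of that set. If $N\in[0,D-1]_{\mathbb Z}$ produces $\widetilde{\mathbf r}=\mathbf r(N)+\mathbf e$, then $N\in\mathcal I_D(\mathbf n(N))$ by \eqref{eq:folding_set_def}, so $\widetilde{\mathbf r}\in\mathcal O_{\boldsymbol{\tau},D}(\mathbf n(N))$. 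By uniqueness, $\widehat{\mathbf n}=\mathbf n(N)$. Observations in $\mathcal A$ form a subset of these, so this direction needs nothing more.

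\emph{Necessity.} Suppose $\mathbf v\in\mathcal O_{\boldsymbol{\tau},D}(\mathbf n)\cap\mathcal O_{\boldsymbol{\tau},D}(\mathbf n')$ with $\mathbf n\ne\mathbf n'$. Then $\mathbf v=\mathbf r(N)+\mathbf e=\mathbf r(N')+\mathbf e'$ for some $N\in\mathcal I_D(\mathbf n)$, $N'\in\mathcal I_D(\mathbf n')$ and $\mathbf e,\mathbf e'\in\mathcal E_{\boldsymbol{\tau}}$. If $\mathbf v\in\mathcal A$, this single observation is physically possible both from $N$ and from $N'$. Any decoder outputs one vector, so it must fail for one of them, because $\mathbf n(N)\neq\mathbf n(N')$. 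The remaining work is to show that if $\mathbf v\notin\mathcal A$, a common point can still be found inside $\mathcal A$.

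\emph{Clipping step (main obstacle).} Define $\mathbf v^\ast$ by clipping each coordinate: $v_i^\ast=\min\{\max\{v_i,0\},m_i-1\}$. Set $e_i^\ast=v_i^\ast-r_i(N)$ and $e_i'^\ast=v_i^\ast-r_i(N')$. It suffices to check $|e_i^\ast|\le|e_i|$, and the same for the primed errors.
\begin{itemize}
\item If $v_i<0$, then $e_i=v_i-r_i(N)<-r_i(N)\le 0$, while $e_i^\ast=-r_i(N)\in(e_i,0]$. Hence $|e_i^\ast|\le|e_i|\le\tau_i$.
\item If $v_i>m_i-1$, then $e_i>m_i-1-r_i(N)\ge0$, while $e_i^\ast=m_i-1-r_i(N)\in[0,e_i)$.
\item Otherwise $v_i^\ast=v_i$ and the error is unchanged.
\end{itemize}
This uses only $r_i(\cdot)\in[0,m_i-1]_{\mathbb Z}$, so the identical argument applies to $N'$. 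Consequently $\mathbf e^\ast,\mathbf e'^\ast\in\mathcal E_{\boldsymbol{\tau}}$ and $\mathbf v^\ast\in\mathcal O_{\boldsymbol{\tau},D}(\mathbf n)\cap\mathcal O_{\boldsymbol{\tau},D}(\mathbf n')\cap\mathcal A$. This yields the ambiguity needed for necessity.

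The same clipping argument also justifies the earlier remark: pairwise disjointness of the sets $\mathcal O_{\boldsymbol{\tau},D}(\cdot)$ is equivalent to pairwise disjointness of the sets $\mathcal O_{\boldsymbol{\tau},D}(\cdot)\cap\mathcal A$.
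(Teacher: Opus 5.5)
Your proof is correct and follows the same route as the paper: robust determination is identified with every physically admissible observation having a unique folding integer vector, and any overlap of the observation sets is shown to give an overlap inside $\mathcal A$. The only difference is how that point is moved into $\mathcal A$. You clip the common point coordinatewise onto $[0,m_i-1]_{\mathbb Z}$, whereas the paper replaces it by the rounded midpoint $[(r_i(N)+r_i(N'))/2]$; both steps use only the fact that $r_i(N),r_i(N')\in[0,m_i-1]_{\mathbb Z}$.
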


\begin{proof}
We first note that restricting the observation sets to $\mathcal A=\prod_{i=1}^{L}[0,m_i-1]_{\mathbb Z}$ does not change their pairwise disjointness.
If two observation sets $\mathcal O_{\boldsymbol{\tau},D}(\mathbf n)$ and $\mathcal O_{\boldsymbol{\tau},D}(\mathbf n')$ overlap, then for some $N$ and $N'$ with the corresponding folding integer vectors $\mathbf n$ and $\mathbf n'$, $|r_i(N)-r_i(N')|\le 2\tau_i$ for every $i$.
Choosing an integer $\bar r_i=[(r_i(N)+r_i(N'))/2]$ gives $|\bar r_i-r_i(N)|\le\tau_i$ and $|\bar r_i-r_i(N')|\le\tau_i$. 
Moreover, $\bar r_i$ lies between $r_i(N)$ and $r_i(N')$, and hence $\bar r_i\in[0,m_i-1]_{\mathbb Z}$. 
Therefore, the two observation sets also overlap inside $\mathcal A$.
The converse is immediate.

Thus, \eqref{eq:observation_set_disjoint} is equivalent to requiring
the physically admissible observation sets $\mathcal O_{\boldsymbol{\tau},D}(\mathbf n)\cap\mathcal A$ and $\mathcal O_{\boldsymbol{\tau},D}(\mathbf n')\cap\mathcal A$ to be pairwise disjoint.
This holds if and only if every admissible erroneous remainder vector identifies a unique folding integer vector, which is exactly the robust determination.
\end{proof}

\subsection{Overlap of Observation Sets}
\label{s3_sub2}

We next derive a simple condition for the overlap of two observation sets. Fix two distinct folding integer vectors
$\mathbf n,\mathbf n'\in\mathcal F(D)$ with $\mathbf n\ne\mathbf n'$.
By the definition of $\mathcal O_{\boldsymbol{\tau},D}(\mathbf n)$ in \eqref{eq:observation_set_def}, the two observation sets
$\mathcal O_{\boldsymbol{\tau},D}(\mathbf n)$ and
$\mathcal O_{\boldsymbol{\tau},D}(\mathbf n')$ overlap if and only if there exist
$N\in\mathcal I_D(\mathbf n)$, $N'\in\mathcal I_D(\mathbf n')$, and
$\mathbf e,\mathbf e'\in\mathcal E_{\boldsymbol{\tau}}$ such that
\begin{equation}\label{eq:two_observation_equal_vector}
\mathbf r(N)+\mathbf e
=
\mathbf r(N')+\mathbf e'.
\end{equation}
Thus, the problem is to decide whether the difference between the two true
remainder vectors can be written as $\mathbf e-\mathbf e'$ for some
$\mathbf e,\mathbf e'\in\mathcal E_{\boldsymbol{\tau}}$.

Since $N\in\mathcal I_D(\mathbf n)$ and $N'\in\mathcal I_D(\mathbf n')$, we have $\mathbf n(N)=\mathbf n$ and $\mathbf n(N')=\mathbf n'$. Hence, for each $i=1,\ldots,L$,
\begin{equation}\label{eq:remainder_difference_first}
\begin{aligned}
r_i(N)-r_i(N')
&=
\bigl(N-n_i m_i\bigr)
-
\bigl(N'-n_i'm_i\bigr)\\
&=
(N-N')-(n_i-n_i')m_i.
\end{aligned}
\end{equation}
This decomposition separates the difference into two parts. The term $N-N'$ depends on the particular integers chosen from the two integer sets, while the term $(n_i-n_i')m_i$ depends only on the two folding integer vectors. Define
\begin{equation}\label{eq:s_def}
s=N-N',
\end{equation}
and
\begin{equation}\label{eq:c_i_def}
c_i(\mathbf n,\mathbf n')
=
(n_i-n_i')m_i,
\qquad
i=1,\ldots,L.
\end{equation}
Then \eqref{eq:remainder_difference_first} becomes
\begin{equation}\label{eq:remainder_difference_s}
r_i(N)-r_i(N')
=
s-c_i(\mathbf n,\mathbf n'),
\qquad
i=1,\ldots,L.
\end{equation}

Let
$\mathcal J_D(\mathbf n,\mathbf n') =
\mathcal I_D(\mathbf n)-\mathcal I_D(\mathbf n')$
be the difference set of the two intervals
$\mathcal I_D(\mathbf n)$ and $\mathcal I_D(\mathbf n')$. By
\eqref{eq:folding_set_interval},
$\mathcal I_D(\mathbf n)
=
[\alpha_D(\mathbf n),\beta_D(\mathbf n)]_{\mathbb Z}$ and 
$\mathcal I_D(\mathbf n')
=
[\alpha_D(\mathbf n'),\beta_D(\mathbf n')]_{\mathbb Z}$.
Hence
\begin{equation}\label{eq:J_D_def}
\mathcal J_D(\mathbf n,\mathbf n')=
[\alpha_D(\mathbf n)-\beta_D(\mathbf n'),
\beta_D(\mathbf n)-\alpha_D(\mathbf n')]_{\mathbb Z},
\end{equation}
which is exactly the set of all possible
values of $s=N-N'$ with $N\in\mathcal I_D(\mathbf n)$ and
$N'\in\mathcal I_D(\mathbf n')$.

We next use the remainder error bound vector $\boldsymbol{\tau}$ to further restrict the
possible values of $s$. If the two observation sets overlap, then
\eqref{eq:two_observation_equal_vector} holds for some
$\mathbf e,\mathbf e'\in\mathcal E_{\boldsymbol{\tau}}$. Hence, for each
$i=1,\ldots,L$,
$r_i(N)-r_i(N')
=
e_i'-e_i$.
Together with \eqref{eq:remainder_difference_s}, this gives
\[
s-c_i(\mathbf n,\mathbf n')
=
e_i'-e_i.
\]
Since $|e_i|\le \tau_i$ and $|e_i'|\le \tau_i$, we must have
\begin{equation}\label{eq:coordinate_error_compatibility}
|s-c_i(\mathbf n,\mathbf n')|
\le
2\tau_i,
\qquad i=1,\ldots,L.
\end{equation}
Thus, for each modulo index $i$, the value of $s$ must lie in the integer
interval
\[
[c_i(\mathbf n,\mathbf n')-2\tau_i,\,
c_i(\mathbf n,\mathbf n')+2\tau_i]_{\mathbb Z}.
\]
Since the condition must hold for all modulo indices, define
\begin{equation}\label{eq:W_tau_def}
\mathcal W_{\boldsymbol{\tau}}(\mathbf n,\mathbf n')
=
\bigcap_{i=1}^{L}
[c_i(\mathbf n,\mathbf n')-2\tau_i,\,
c_i(\mathbf n,\mathbf n')+2\tau_i]_{\mathbb Z}.
\end{equation}
Equivalently,
\begin{equation}\label{eq:W_pq_def}
\mathcal W_{\boldsymbol{\tau}}(\mathbf n,\mathbf n')
=
[p_{\boldsymbol{\tau}}(\mathbf n,\mathbf n'),
q_{\boldsymbol{\tau}}(\mathbf n,\mathbf n')]_{\mathbb Z},
\end{equation}
where
\begin{equation}\label{eq:pq_tau_def}
\begin{aligned}
p_{\boldsymbol{\tau}}(\mathbf n,\mathbf n')
&=
\max_{1\le i\le L}
\left(
c_i(\mathbf n,\mathbf n')-2\tau_i
\right),\\
q_{\boldsymbol{\tau}}(\mathbf n,\mathbf n')
&=
\min_{1\le i\le L}
\left(
c_i(\mathbf n,\mathbf n')+2\tau_i
\right).
\end{aligned}
\end{equation}
Therefore, if the two observation sets overlap, then the corresponding value
$s=N-N'$ must belong to
$\mathcal W_{\boldsymbol{\tau}}(\mathbf n,\mathbf n')$.

The argument above shows that if the two observation sets overlap, then the corresponding difference $s=N-N'$ must belong to both $\mathcal J_D(\mathbf n,\mathbf n')$ and $\mathcal W_{\boldsymbol{\tau}}(\mathbf n,\mathbf n')$. The next lemma shows that this necessary condition is also sufficient.

\begin{lemma}\label{lem:interval_overlap}
For two distinct folding integer vectors $\mathbf n,\mathbf n'\in\mathcal F(D)$,
\begin{equation}\label{eq:observation_intersection_nonempty}
\mathcal O_{\boldsymbol{\tau},D}(\mathbf n)
\cap
\mathcal O_{\boldsymbol{\tau},D}(\mathbf n')
\ne
\emptyset
\end{equation}
if and only if
\begin{equation}\label{eq:J_W_intersection_condition}
\mathcal J_D(\mathbf n,\mathbf n')
\cap
\mathcal W_{\boldsymbol{\tau}}(\mathbf n,\mathbf n')
\ne
\emptyset.
\end{equation}
\end{lemma}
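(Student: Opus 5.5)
The plan has two directions. The ``only if'' direction is already contained in the discussion preceding the lemma; the real content is the ``if'' direction. For necessity, I would recall the argument verbatim. If the two observation sets share a vector, there are $N\in\mathcal I_D(\mathbf n)$, $N'\in\mathcal I_D(\mathbf n')$ and $\mathbf e,\mathbf e'\in\mathcal E_{\boldsymbol{\tau}}$ satisfying \eqref{eq:two_observation_equal_vector}. Then $s=N-N'$ lies in $\mathcal J_D(\mathbf n,\mathbf n')$ by \eqref{eq:J_D_def}. Moreover, \eqref{eq:remainder_difference_s} together with $|e_i'-e_i|\le 2\tau_i$ gives \eqref{eq:coordinate_error_compatibility} for every $i$, so $s\in\mathcal W_{\boldsymbol{\tau}}(\mathbf n,\mathbf n')$.

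For sufficiency, I would take any $s$ in the intersection \eqref{eq:J_W_intersection_condition}. First, since $s\in\mathcal J_D(\mathbf n,\mathbf n')$ and $\mathcal J_D$ is exactly the difference set of the two integer intervals, I pick $N\in\mathcal I_D(\mathbf n)$ and $N'\in\mathcal I_D(\mathbf n')$ with $N-N'=s$. Such a pair exists because the difference set of two nonempty integer intervals $[a,b]_{\mathbb Z}-[a',b']_{\mathbb Z}$ is the full integer interval $[a-b',b-a']_{\mathbb Z}$. For an explicit choice, take $N'=\max\{a',\,a-s\}$ and $N=N'+s$, then check the bounds.

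Second, I would build the error vectors coordinatewise. By \eqref{eq:remainder_difference_s}, $d_i:=r_i(N)-r_i(N')=s-c_i(\mathbf n,\mathbf n')$, and $s\in\mathcal W_{\boldsymbol{\tau}}(\mathbf n,\mathbf n')$ gives $|d_i|\le 2\tau_i$. I need integers $e_i,e_i'$ with $|e_i|,|e_i'|\le\tau_i$ and $e_i'-e_i=d_i$. I would set $e_i=-\lceil d_i/2\rceil$ and $e_i'=d_i-\lceil d_i/2\rceil=\lfloor d_i/2\rfloor$. Both have absolute value at most $\lceil |d_i|/2\rceil\le\tau_i$, using that $\tau_i$ is an integer and $|d_i|\le2\tau_i$. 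Equivalently, as in the proof of Lemma~\ref{lem:observation_set_criterion}, the common observation is $\bar r_i=[(r_i(N)+r_i(N'))/2]$. Then $\mathbf r(N)+\mathbf e=\mathbf r(N')+\mathbf e'$, which lies in both observation sets, proving \eqref{eq:observation_intersection_nonempty}.

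The only step needing care is the integrality in the second step. The bound $|d_i|\le 2\tau_i$ must split into two integer errors each bounded by $\tau_i$, which works precisely because $\tau_i\in\mathbb Z_{\ge0}$ and the rounding absorbs odd $d_i$. The first step, realizing $s$ by an actual pair $(N,N')$, is routine once one observes that $\mathcal J_D$ is a contiguous integer interval.
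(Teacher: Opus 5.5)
Your proof is correct and follows the paper's argument essentially verbatim: necessity comes from the discussion preceding the lemma, and sufficiency comes from realizing $s$ as $N-N'$ and splitting each $s-c_i(\mathbf n,\mathbf n')$ into two integer errors via floor/ceiling halves, where you merely mirror which of $e_i,e_i'$ receives the ceiling. One small quibble: your construction gives the common coordinate $\lfloor (r_i(N)+r_i(N'))/2\rfloor$, which differs from $[(r_i(N)+r_i(N'))/2]$ when $r_i(N)+r_i(N')$ is odd, so your ``equivalently'' should read ``alternatively''; both choices yield a valid common observation.
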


\begin{proof}
The necessity follows from the argument above. We prove the sufficiency.

Assume that
$s\in
\mathcal J_D(\mathbf n,\mathbf n')
\cap
\mathcal W_{\boldsymbol{\tau}}(\mathbf n,\mathbf n')$.
Since $s\in\mathcal J_D(\mathbf n,\mathbf n')$, there exist
$N\in\mathcal I_D(\mathbf n)$ and $N'\in\mathcal I_D(\mathbf n')$ such that
$s=N-N'$.

For each $i=1,\ldots,L$, define
$\eta_i
=
s-c_i(\mathbf n,\mathbf n')$.
Since $s\in\mathcal W_{\boldsymbol{\tau}}(\mathbf n,\mathbf n')$, we have
$|\eta_i|\le 2\tau_i$, for all $i=1,\ldots,L$.
For each $i$, choose
\[
e_i'
=
\left\lceil\frac{\eta_i}{2}\right\rceil,
\qquad
e_i
=
-\left\lfloor\frac{\eta_i}{2}\right\rfloor.
\]
Then
$e_i'-e_i
= \eta_i$.
Since $\eta_i$ and $\tau_i$ are integers and $|\eta_i|\le 2\tau_i$, this
choice also satisfies
$|e_i|\le \tau_i$ and 
$|e_i'|\le \tau_i$.

Using \eqref{eq:remainder_difference_s}, we obtain
\[
r_i(N)-r_i(N')
=
s-c_i(\mathbf n,\mathbf n')
=
\eta_i
=
e_i'-e_i.
\]
Hence
$r_i(N)+e_i
=
r_i(N')+e_i'$, for $i=1,\ldots,L$.
Let $\mathbf e=(e_1,\ldots,e_L)$ and
$\mathbf e'=(e_1',\ldots,e_L')$. Then
$\mathbf e,\mathbf e'\in\mathcal E_{\boldsymbol{\tau}}$ and
\[
\mathbf r(N)+\mathbf e
=
\mathbf r(N')+\mathbf e'.
\]
By the definition of the observation sets in
\eqref{eq:observation_set_def}, this implies \eqref{eq:observation_intersection_nonempty}.
This proves the sufficiency.
\end{proof}

\subsection{Separation Margin for the Exact Tradeoff}
\label{s3_sub3}

The interval test above gives an exact criterion for the overlap of two
observation sets. We now rewrite this criterion as a scalar separation margin.

Fix two distinct folding integer vectors $\mathbf n,\mathbf n'\in\mathcal F(D)$. By \eqref{eq:J_D_def} and \eqref{eq:W_pq_def}, both $\mathcal J_D(\mathbf n,\mathbf n')$ and $\mathcal W_{\boldsymbol{\tau}}(\mathbf n,\mathbf n')$ are integer intervals. Hence their intersection is also an integer interval. Define
\begin{equation}\label{eq:ell_u_margin_def}
\begin{aligned}
\ell_{\boldsymbol{\tau},D}(\mathbf n,\mathbf n')
&=
\max\left\{
\alpha_D(\mathbf n)-\beta_D(\mathbf n'),
p_{\boldsymbol{\tau}}(\mathbf n,\mathbf n')
\right\},\\
u_{\boldsymbol{\tau},D}(\mathbf n,\mathbf n')
&=
\min\left\{
\beta_D(\mathbf n)-\alpha_D(\mathbf n'),
q_{\boldsymbol{\tau}}(\mathbf n,\mathbf n')
\right\}.
\end{aligned}
\end{equation}
Then
\begin{equation}\label{eq:JW_as_interval}
\mathcal J_D(\mathbf n,\mathbf n')
\cap
\mathcal W_{\boldsymbol{\tau}}(\mathbf n,\mathbf n')
=
[\ell_{\boldsymbol{\tau},D}(\mathbf n,\mathbf n'),
u_{\boldsymbol{\tau},D}(\mathbf n,\mathbf n')]_{\mathbb Z}.
\end{equation}
This motivates the separation margin
\begin{equation}\label{eq:Theta_pairwise_def}
\Theta_{\boldsymbol{\tau},D}(\mathbf n,\mathbf n')
=
\ell_{\boldsymbol{\tau},D}(\mathbf n,\mathbf n')
-
u_{\boldsymbol{\tau},D}(\mathbf n,\mathbf n').
\end{equation}
With this notation,
$\mathcal J_D(\mathbf n,\mathbf n')
\cap
\mathcal W_{\boldsymbol{\tau}}(\mathbf n,\mathbf n')$
is empty if and only if
\begin{equation}\label{eq:pairwise_positive_condition}
\Theta_{\boldsymbol{\tau},D}(\mathbf n,\mathbf n')>0.
\end{equation}

The robust recovery requires this separation for every pair of distinct folding integer vectors.
We therefore define the global separation margin by
\begin{equation}\label{eq:Delta_full_def}
\Delta(\boldsymbol{\tau},D)
=
\begin{cases}
+\infty,
&
|\mathcal F(D)|=1,\\[1mm]
\displaystyle
\min_{\substack{
\mathbf n,\mathbf n'\in\mathcal F(D)\\
\mathbf n\ne\mathbf n'
}}
\Theta_{\boldsymbol{\tau},D}(\mathbf n,\mathbf n'),
&
|\mathcal F(D)|\ge2.
\end{cases}
\end{equation}
The case $|\mathcal F(D)|=1$ means that all integers in
$[0,D-1]_{\mathbb Z}$ have the same folding integer vector, so there is no pair of
distinct folding integer vectors to separate.

\begin{example}\label{ex:theta_delta_example}
Consider again Example~\ref{ex:folding_vector_example}. For short, write
$\mathbf n_0=(0,0)$, $\mathbf n_1=(1,0)$, and $\mathbf n_2=(1,1)$.

Let $\boldsymbol{\tau}=(0,1)$. We compute the separation margin for
$\mathbf n_0$ and $\mathbf n_2$. Since
$\mathcal I_{12}(\mathbf n_0)=[0,5]_{\mathbb Z}$ and
$\mathcal I_{12}(\mathbf n_2)=[10,11]_{\mathbb Z}$, we have
\[
\mathcal J_{12}(\mathbf n_0,\mathbf n_2)
=
\mathcal I_{12}(\mathbf n_0)-\mathcal I_{12}(\mathbf n_2)
=
[-11,-5]_{\mathbb Z}.
\]
Also, $c_1(\mathbf n_0,\mathbf n_2)=-6$ and
$c_2(\mathbf n_0,\mathbf n_2)=-10$. Therefore,
$p_{\boldsymbol{\tau}}(\mathbf n_0,\mathbf n_2)=-6$ and
$q_{\boldsymbol{\tau}}(\mathbf n_0,\mathbf n_2)=-8$. Hence
$p_{\boldsymbol{\tau}}(\mathbf n_0,\mathbf n_2)>
q_{\boldsymbol{\tau}}(\mathbf n_0,\mathbf n_2)$, so
$\mathcal W_{\boldsymbol{\tau}}(\mathbf n_0,\mathbf n_2)=\emptyset$.
Equivalently,
\[
\ell_{\boldsymbol{\tau},12}(\mathbf n_0,\mathbf n_2)=-6,
\qquad
u_{\boldsymbol{\tau},12}(\mathbf n_0,\mathbf n_2)=-8,
\]
and therefore
\[
\Theta_{\boldsymbol{\tau},12}(\mathbf n_0,\mathbf n_2)
=
-6-(-8)=2>0.
\]
For the other two pairs, one similarly obtains
$\Theta_{\boldsymbol{\tau},12}(\mathbf n_0,\mathbf n_1)=4$ and
$\Theta_{\boldsymbol{\tau},12}(\mathbf n_1,\mathbf n_2)=8$. Thus,
\[
\Delta(\boldsymbol{\tau},12)=\min\{4,2,8\}=2.
\]
Therefore, all distinct folding integer vectors in $\mathcal F(12)$ are
separated under $\boldsymbol{\tau}=(0,1)$. Equivalently, no admissible erroneous
remainder vector can be generated by two different folding integer vectors, and
robust determination holds for $D=12$ according to
Lemma~\ref{lem:observation_set_criterion}.

If the error bound vector is changed to $\boldsymbol{\tau}=(1,1)$, then for the
same pair $\mathbf n_0$ and $\mathbf n_2$, we obtain
$p_{\boldsymbol{\tau}}(\mathbf n_0,\mathbf n_2)
=
q_{\boldsymbol{\tau}}(\mathbf n_0,\mathbf n_2)
=
-8$. Thus
$\mathcal W_{\boldsymbol{\tau}}(\mathbf n_0,\mathbf n_2)=\{-8\}$. Since
$-8\in\mathcal J_{12}(\mathbf n_0,\mathbf n_2)$, the two intervals now
intersect, and
\[
\Theta_{\boldsymbol{\tau},12}(\mathbf n_0,\mathbf n_2)=0.
\]
Therefore, one admissible erroneous remainder vector may correspond to two
different folding integer vectors, and robust determination no longer holds for
$D=12$ under $\boldsymbol{\tau}=(1,1)$.
\end{example}

The following theorem gives the exact robust determination condition for fixed $D$ and $\boldsymbol{\tau}$.

\begin{thm}\label{thm:exact_fixed_D_tradeoff}
Fix a range length $1\le D\le M$ and an error bound vector
$\boldsymbol{\tau}\in\mathbb Z_{\ge0}^L$. Under the remainder error model
\eqref{eq:error_model}--\eqref{eq:nonuniform_error_bound}, every integer
$N\in[0,D-1]_{\mathbb Z}$ can be robustly determined if and only if
\begin{equation}\label{eq:exact_full_condition}
\Delta(\boldsymbol{\tau},D)>0.
\end{equation}
\end{thm}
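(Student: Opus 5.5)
The proof is a chain of the equivalences already established in Lemma~\ref{lem:observation_set_criterion} and Lemma~\ref{lem:interval_overlap}, followed by a reduction of the pairwise condition to the scalar $\Delta(\boldsymbol{\tau},D)$. First, by Lemma~\ref{lem:observation_set_criterion}, robust determination of every $N\in[0,D-1]_{\mathbb Z}$ is equivalent to pairwise disjointness of the observation sets $\mathcal O_{\boldsymbol{\tau},D}(\mathbf n)$ over distinct $\mathbf n,\mathbf n'\in\mathcal F(D)$. Second, by Lemma~\ref{lem:interval_overlap}, disjointness for a fixed pair is equivalent to $\mathcal J_D(\mathbf n,\mathbf n')\cap\mathcal W_{\boldsymbol{\tau}}(\mathbf n,\mathbf n')=\emptyset$.

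Next, I verify \eqref{eq:JW_as_interval}. It holds because the intersection of the integer intervals $[a,b]_{\mathbb Z}$ and $[p,q]_{\mathbb Z}$ is $[\max\{a,p\},\min\{b,q\}]_{\mathbb Z}$. This remains true when either interval is empty under the convention $[a,b]_{\mathbb Z}=\emptyset$ for $a>b$. Indeed, if $p>q$ or $a>b$, then $\max\{a,p\}>\min\{b,q\}$ automatically. Hence the intersection is empty if and only if $\ell_{\boldsymbol{\tau},D}>u_{\boldsymbol{\tau},D}$, that is, $\Theta_{\boldsymbol{\tau},D}(\mathbf n,\mathbf n')>0$. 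Since all quantities are integers, no off-by-one issue arises here. This step, checking that the empty-$\mathcal W$ case of Example~\ref{ex:theta_delta_example} is handled correctly, is the only place needing care. I expect it to be the main (though mild) obstacle.

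Finally, I aggregate over pairs. If $|\mathcal F(D)|=1$, there are no distinct pairs, so the disjointness condition holds vacuously. Robust determination therefore holds, and $\Delta=+\infty>0$ by definition. If $|\mathcal F(D)|\ge2$, then $\mathcal F(D)$ is finite, so the minimum in \eqref{eq:Delta_full_def} is attained. Then $\Delta(\boldsymbol{\tau},D)>0$ if and only if $\Theta_{\boldsymbol{\tau},D}(\mathbf n,\mathbf n')>0$ for every distinct pair. Chaining the equivalences from the first step gives the theorem.
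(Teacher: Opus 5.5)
Your proposal is correct and follows the paper's own proof, which likewise chains Lemma~\ref{lem:observation_set_criterion}, Lemma~\ref{lem:interval_overlap}, and the equivalence between $\mathcal J_D(\mathbf n,\mathbf n')\cap\mathcal W_{\boldsymbol{\tau}}(\mathbf n,\mathbf n')=\emptyset$ and $\Theta_{\boldsymbol{\tau},D}(\mathbf n,\mathbf n')>0$, and then takes the minimum over the finite set $\mathcal F(D)$. Your explicit check that \eqref{eq:JW_as_interval} still holds when $\mathcal W_{\boldsymbol{\tau}}(\mathbf n,\mathbf n')$ is empty, and your treatment of the case $|\mathcal F(D)|=1$, fill in details the paper leaves implicit.
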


\begin{proof}
This is a direct consequence of Lemma~\ref{lem:observation_set_criterion},
Lemma~\ref{lem:interval_overlap}, and
\eqref{eq:pairwise_positive_condition}--\eqref{eq:Delta_full_def}, where the
minimization is taken only over the finite set $\mathcal F(D)$.
\end{proof}

Theorem~\ref{thm:exact_fixed_D_tradeoff} reduces the robust determination for
fixed $D$ and $\boldsymbol{\tau}$ to the sign of
$\Delta(\boldsymbol{\tau},D)$. We next present two monotonicity properties
that will be used to characterize the dynamic range
$D(\boldsymbol{\tau})$ for a fixed remainder error bound vector $\boldsymbol{\tau}$ and the admissible remainder error
bounds for a fixed range length.

\begin{thm}\label{thm:margin_monotonicity}
The separation margin $\Delta(\boldsymbol{\tau},D)$ in \eqref{eq:Delta_full_def} satisfies the following monotonicity properties.

(i) For a fixed remainder error bound vector $\boldsymbol{\tau}$, if $1\le D_1\le D_2\le M$, then
\begin{equation}\label{eq:monotone_D}
\Delta(\boldsymbol{\tau},D_2)
\le
\Delta(\boldsymbol{\tau},D_1).
\end{equation}

(ii) For a fixed range length $D$, if $\boldsymbol{\tau}\preceq\boldsymbol{\eta}$, then
\begin{equation}\label{eq:monotone_tau}
\Delta(\boldsymbol{\eta},D)
\le
\Delta(\boldsymbol{\tau},D).
\end{equation}
\end{thm}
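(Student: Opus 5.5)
We will argue directly from the definitions \eqref{eq:ell_u_margin_def}--\eqref{eq:Delta_full_def}. The key observation is that $\Theta_{\boldsymbol{\tau},D}(\mathbf n,\mathbf n')$ is a difference $\ell-u$, where $\ell$ is a maximum of lower endpoints and $u$ is a minimum of upper endpoints. So $\Theta$ can only decrease when either interval $\mathcal J_D(\mathbf n,\mathbf n')$ or $\mathcal W_{\boldsymbol{\tau}}(\mathbf n,\mathbf n')$ is enlarged, in the sense that its lower endpoint decreases and its upper endpoint increases. This holds regardless of whether the intervals are empty, since we compare only the defining expressions, never the sets themselves. Both parts of the theorem then reduce to showing such an enlargement, together with a check of how the index set of the minimization changes.

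For part (ii), fix $D$, so $\mathcal F(D)$ and $\mathcal J_D(\mathbf n,\mathbf n')$ do not depend on the error bound. The quantities $c_i(\mathbf n,\mathbf n')$ depend only on $\mathbf n,\mathbf n'$. If $\boldsymbol{\tau}\preceq\boldsymbol{\eta}$, then $c_i-2\eta_i\le c_i-2\tau_i$ and $c_i+2\eta_i\ge c_i+2\tau_i$ for every $i$. Hence, by \eqref{eq:pq_tau_def}, $p_{\boldsymbol{\eta}}\le p_{\boldsymbol{\tau}}$ and $q_{\boldsymbol{\eta}}\ge q_{\boldsymbol{\tau}}$. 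Therefore $\ell_{\boldsymbol{\eta},D}\le \ell_{\boldsymbol{\tau},D}$ and $u_{\boldsymbol{\eta},D}\ge u_{\boldsymbol{\tau},D}$, so $\Theta_{\boldsymbol{\eta},D}(\mathbf n,\mathbf n')\le\Theta_{\boldsymbol{\tau},D}(\mathbf n,\mathbf n')$ for every pair. Taking the minimum over the same finite set of pairs gives \eqref{eq:monotone_tau}. The case $|\mathcal F(D)|=1$ gives $+\infty$ on both sides.

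For part (i), fix $\boldsymbol{\tau}$ and let $D_1\le D_2$. First, $\mathcal F(D_1)\subseteq\mathcal F(D_2)$, since $[0,D_1-1]_{\mathbb Z}\subseteq[0,D_2-1]_{\mathbb Z}$. Second, for $\mathbf n\in\mathcal F(D_1)$, the value $\alpha_{D}(\mathbf n)$ in \eqref{eq:alpha_def} does not depend on $D$, while $\beta_D(\mathbf n)$ in \eqref{eq:beta_def} is nondecreasing in $D$ because only the term $D-1$ changes. Hence, for a pair $\mathbf n\ne\mathbf n'$ in $\mathcal F(D_1)$, we get $\alpha_{D_2}(\mathbf n)-\beta_{D_2}(\mathbf n')\le\alpha_{D_1}(\mathbf n)-\beta_{D_1}(\mathbf n')$ and $\beta_{D_2}(\mathbf n)-\alpha_{D_2}(\mathbf n')\ge\beta_{D_1}(\mathbf n)-\alpha_{D_1}(\mathbf n')$. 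Meanwhile $p_{\boldsymbol{\tau}},q_{\boldsymbol{\tau}}$ are unchanged. Thus $\Theta_{\boldsymbol{\tau},D_2}(\mathbf n,\mathbf n')\le\Theta_{\boldsymbol{\tau},D_1}(\mathbf n,\mathbf n')$.

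To finish part (i), we handle the cases of the definition of $\Delta$. If $|\mathcal F(D_1)|=1$, then $\Delta(\boldsymbol{\tau},D_1)=+\infty$ and there is nothing to prove. Otherwise $|\mathcal F(D_2)|\ge|\mathcal F(D_1)|\ge2$. Then $\Delta(\boldsymbol{\tau},D_2)$ is a minimum over a superset of the pairs used for $D_1$, and on the common pairs its values are pointwise no larger. This gives \eqref{eq:monotone_D}. The only step needing care is this bookkeeping: the margins for $D_1$ and $D_2$ are minima over different index sets, and a pair's margin is evaluated with $D$-dependent intervals. We resolve it by restricting the $D_2$-minimum to pairs from $\mathcal F(D_1)$ and comparing termwise; dropping pairs can only increase a minimum, so this restricted minimum is an upper bound for $\Delta(\boldsymbol{\tau},D_2)$.
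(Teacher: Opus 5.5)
Your proof is correct and follows the paper's route. You compare $\Theta$ pair by pair, using that $\mathcal J_D(\mathbf n,\mathbf n')$ grows with $D$ and $\mathcal W_{\boldsymbol{\tau}}(\mathbf n,\mathbf n')$ grows with $\boldsymbol{\tau}$, and then take the minimum over $\mathcal F(D_1)\subseteq\mathcal F(D_2)$, or over the same $\mathcal F(D)$. Comparing the endpoint expressions $\alpha,\beta,p,q$ directly, rather than set inclusions, is slightly more careful than the paper's wording: $\mathcal W_{\boldsymbol{\tau}}(\mathbf n,\mathbf n')$ can be empty (when $p_{\boldsymbol{\tau}}>q_{\boldsymbol{\tau}}$), and then inclusion alone says nothing about $\Theta$.
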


\begin{proof}
We first prove \eqref{eq:monotone_D}. Suppose $1\le D_1\le D_2\le M$.
Then
\[
[0,D_1-1]_{\mathbb Z}
\subseteq
[0,D_2-1]_{\mathbb Z}.
\]
Hence every folding integer vector in $\mathcal F(D_1)$ also belongs to
$\mathcal F(D_2)$. Fix two distinct folding integer vectors
$\mathbf n,\mathbf n'\in\mathcal F(D_1)$. Since
$\mathcal I_{D_1}(\mathbf n)
\subseteq
\mathcal I_{D_2}(\mathbf n)$ and
$\mathcal I_{D_1}(\mathbf n')
\subseteq
\mathcal I_{D_2}(\mathbf n')$,
we obtain
\[
\mathcal J_{D_1}(\mathbf n,\mathbf n')
\subseteq
\mathcal J_{D_2}(\mathbf n,\mathbf n').
\]
The interval $\mathcal W_{\boldsymbol{\tau}}(\mathbf n,\mathbf n')$ is
independent of $D$. Therefore, for the same pair
$\mathbf n,\mathbf n'$, the separation margin cannot increase, that is
\[
\Theta_{\boldsymbol{\tau},D_2}(\mathbf n,\mathbf n')
\le
\Theta_{\boldsymbol{\tau},D_1}(\mathbf n,\mathbf n').
\]
Moreover, since $\mathcal F(D_1)\subseteq \mathcal F(D_2)$, the minimum
defining $\Delta(\boldsymbol{\tau},D_2)$ in \eqref{eq:Delta_full_def} is taken over a set of pairs that
contains all pairs from $\mathcal F(D_1)$. Thus,
\[
\Delta(\boldsymbol{\tau},D_2)
\le
\Delta(\boldsymbol{\tau},D_1).
\]

We next prove \eqref{eq:monotone_tau}. Suppose
$\boldsymbol{\tau}\preceq\boldsymbol{\eta}$. Then, for every pair
$\mathbf n\ne\mathbf n'$,
\[
\mathcal W_{\boldsymbol{\tau}}(\mathbf n,\mathbf n')
\subseteq
\mathcal W_{\boldsymbol{\eta}}(\mathbf n,\mathbf n').
\]
The interval $\mathcal J_D(\mathbf n,\mathbf n')$ is not related to remainders. Thus, for
each pair, enlarging the remainder error bound vector cannot increase the separation
margin, that is 
\[
\Theta_{\boldsymbol{\eta},D}(\mathbf n,\mathbf n')
\le
\Theta_{\boldsymbol{\tau},D}(\mathbf n,\mathbf n').
\]
Taking the minimum over all distinct pairs in $\mathcal F(D)$ gives
\[
\Delta(\boldsymbol{\eta},D)
\le
\Delta(\boldsymbol{\tau},D). \qedhere
\]
\end{proof}

\subsection{Dynamic Range for a Given Error Bound}
\label{s3_sub4}

We now fix a remainder error bound vector $\boldsymbol{\tau}$ and characterize the
dynamic range $D(\boldsymbol{\tau})$. By
Theorem~\ref{thm:exact_fixed_D_tradeoff}, a range length $D$ is admissible
under $\boldsymbol{\tau}$ if and only if
$\Delta(\boldsymbol{\tau},D)>0$. By
Theorem~\ref{thm:margin_monotonicity}, once this condition fails for
some $D$, it fails for every larger range length. Hence the largest admissible
range length is exactly
\begin{equation}\label{eq:D_tau_by_margin}
D(\boldsymbol{\tau})
=
\max
\left\{
D\in[1,M]_{\mathbb Z}:
\Delta(\boldsymbol{\tau},D)>0
\right\}.
\end{equation}

\begin{remark}\label{rem:D_tau_computation}
Equation~\eqref{eq:D_tau_by_margin} already gives an exact way to obtain
$D(\boldsymbol{\tau})$. Since the condition
$\Delta(\boldsymbol{\tau},D)>0$ is monotonic in $D$, one can find the largest
admissible range length by a one-dimensional search over $[1,M]_{\mathbb Z}$.
The next result gives a more structural description. It shows that
$D(\boldsymbol{\tau})$ is determined by the earliest integer point where ambiguity first appears as
the range length increases. Here, an ambiguity means that two integers with
different folding integer vectors can generate the same erroneous remainder vector
under the error bound vector $\boldsymbol{\tau}$.
\end{remark}

To describe the earliest integer point with ambiguity, we consider all folding integer vectors over
the largest error-free CRT dynamic range $[0,M-1]_{\mathbb Z}$.
Fix two distinct folding integer vectors
$\mathbf n,\mathbf n'\in\mathcal F(M)$, and keep the order $(\mathbf n,\mathbf n')$ fixed for this pair
$\mathbf n$ and $\mathbf n'$ in the definitions below.  Then
\[
\mathcal I_M(\mathbf n)
=
[\alpha_M(\mathbf n),\beta_M(\mathbf n)]_{\mathbb Z},
\,
\mathcal I_M(\mathbf n')
=
[\alpha_M(\mathbf n'),\beta_M(\mathbf n')]_{\mathbb Z}.
\]
By Lemma~\ref{lem:interval_overlap}, this pair $(\mathbf n,\mathbf n')$ can produce an
ambiguity under the remainder error bound vector $\boldsymbol{\tau}$ if and only if
\[
\mathcal J_M(\mathbf n,\mathbf n')
\cap
\mathcal W_{\boldsymbol{\tau}}(\mathbf n,\mathbf n')
\ne
\emptyset.
\]
By \eqref{eq:JW_as_interval}, the feasible values of $s=N-N'$ that can
produce such an ambiguity are exactly
\begin{equation}\label{eq:feasible_s_interval_M}
[
\ell_{\boldsymbol{\tau},M}(\mathbf n,\mathbf n'),
u_{\boldsymbol{\tau},M}(\mathbf n,\mathbf n')
]_{\mathbb Z}.
\end{equation}
If this interval is empty, then this pair $(\mathbf n,\mathbf n')$ cannot produce an ambiguity in $[0,M-1]_{\mathbb Z}$ under $\boldsymbol{\tau}$.

We now characterize where the first ambiguity from a fixed pair $(\mathbf n,\mathbf n')$ can
appear. Suppose that the interval in \eqref{eq:feasible_s_interval_M} is
nonempty. Since $\mathbf n\ne\mathbf n'$, the sets
$\mathcal I_M(\mathbf n)$ and $\mathcal I_M(\mathbf n')$ are disjoint. Hence
no feasible value of $s=N-N'$ can be zero. Therefore, the feasible interval in
\eqref{eq:feasible_s_interval_M} lies either entirely on the positive side or
entirely on the negative side.

Define
\begin{equation}\label{eq:s_star_def}
s_{\boldsymbol{\tau}}^\star(\mathbf n,\mathbf n')
=
\begin{cases}
\ell_{\boldsymbol{\tau},M}(\mathbf n,\mathbf n'),
&
\ell_{\boldsymbol{\tau},M}(\mathbf n,\mathbf n')>0,\\[1mm]
u_{\boldsymbol{\tau},M}(\mathbf n,\mathbf n'),
&
u_{\boldsymbol{\tau},M}(\mathbf n,\mathbf n')<0.
\end{cases}
\end{equation}
Thus, $s_{\boldsymbol{\tau}}^\star(\mathbf n,\mathbf n')$ is the feasible
value of $s$ closest to zero. For this pair, define
\begin{equation}\label{eq:T_tau_def}
\begin{aligned}
T_{\boldsymbol{\tau}}(\mathbf n,\mathbf n')
=
&\max\{
\alpha_M(\mathbf n'),
\alpha_M(\mathbf n)-s_{\boldsymbol{\tau}}^\star(\mathbf n,\mathbf n')
\}  \\
&+\max\{
s_{\boldsymbol{\tau}}^\star(\mathbf n,\mathbf n'),0
\}.
\end{aligned}
\end{equation}

\begin{lemma}
\label{lem:first_ambiguity_pair}
Fix two distinct folding integer vectors
$\mathbf n,\mathbf n'\in\mathcal F(M)$. Suppose that
\[
\mathcal J_M(\mathbf n,\mathbf n')
\cap
\mathcal W_{\boldsymbol{\tau}}(\mathbf n,\mathbf n')
\ne
\emptyset.
\]
Then $T_{\boldsymbol{\tau}}(\mathbf n,\mathbf n')$ is the smallest possible
value of $\max\{N,N'\}$ over all integer pairs $N,N'$ satisfying
\[
N\in\mathcal I_M(\mathbf n),
\qquad
N'\in\mathcal I_M(\mathbf n'),
\]
such that $N$ and $N'$ can generate the same erroneous remainder vector under
the remainder error bound vector $\boldsymbol{\tau}$.
\end{lemma}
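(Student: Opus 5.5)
The plan is to reduce the statement to a one-dimensional optimization over the difference $s=N-N'$. By the proof of Lemma~\ref{lem:interval_overlap}, applied with $D=M$, a pair $N\in\mathcal I_M(\mathbf n)$, $N'\in\mathcal I_M(\mathbf n')$ can generate the same erroneous remainder vector under $\boldsymbol{\tau}$ if and only if $s=N-N'\in\mathcal W_{\boldsymbol{\tau}}(\mathbf n,\mathbf n')$. The sufficiency argument there uses only $s$ and builds the errors from it, so it works for any particular $N,N'$ with $N-N'=s$. Membership $s\in\mathcal J_M(\mathbf n,\mathbf n')$ holds automatically, since $N$ and $N'$ lie in the two intervals. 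Hence the feasible pairs are exactly those with $N\in[\alpha_M(\mathbf n),\beta_M(\mathbf n)]_{\mathbb Z}$, $N'\in[\alpha_M(\mathbf n'),\beta_M(\mathbf n')]_{\mathbb Z}$, and $N-N'\in[\ell,u]_{\mathbb Z}$, where $\ell=\ell_{\boldsymbol{\tau},M}(\mathbf n,\mathbf n')$ and $u=u_{\boldsymbol{\tau},M}(\mathbf n,\mathbf n')$ are as in \eqref{eq:feasible_s_interval_M}. The task is then to minimize $\max\{N,N'\}$ over this set.

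The first step is to minimize over $N'$ for each fixed feasible $s$. Writing $N=N'+s$, the value is $\max\{N,N'\}=N'+\max\{s,0\}$. This is increasing in $N'$, so for that $s$ we want the smallest $N'$ with $N'\ge\alpha_M(\mathbf n')$ and $N'+s\ge\alpha_M(\mathbf n)$. That smallest choice is $N'_s=\max\{\alpha_M(\mathbf n'),\alpha_M(\mathbf n)-s\}$. We must also check it respects the upper bounds $N'_s\le\beta_M(\mathbf n')$ and $N'_s+s\le\beta_M(\mathbf n)$. If $N'_s=\alpha_M(\mathbf n')$, the second bound follows from $s\le u\le\beta_M(\mathbf n)-\alpha_M(\mathbf n')$. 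Otherwise $N'_s=\alpha_M(\mathbf n)-s$, and the first bound follows from $s\ge\ell\ge\alpha_M(\mathbf n)-\beta_M(\mathbf n')$. The remaining inequality in each case follows from $\alpha\le\beta$. This gives the reduced objective
\[
g(s)=\max\{\alpha_M(\mathbf n'),\alpha_M(\mathbf n)-s\}+\max\{s,0\},\qquad s\in[\ell,u]_{\mathbb Z}.
\]

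The second step, which I expect to be the main point to get right, is showing that $g$ is minimized at $s^\star_{\boldsymbol{\tau}}(\mathbf n,\mathbf n')$, the feasible value closest to zero. As argued before \eqref{eq:s_star_def}, $[\ell,u]$ lies entirely on one side of zero, so I split into two cases.

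In the case $s>0$, we have $g(s)=\max\{\alpha_M(\mathbf n')+s,\alpha_M(\mathbf n)\}$, which is nondecreasing in $s$. Its minimum is therefore at $s=\ell=s^\star$.

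In the case $s<0$, we have $g(s)=\max\{\alpha_M(\mathbf n'),\alpha_M(\mathbf n)-s\}$, which is nonincreasing in $s$. Its minimum is therefore at $s=u=s^\star$.

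In both cases $g(s^\star)$ equals $T_{\boldsymbol{\tau}}(\mathbf n,\mathbf n')$ in \eqref{eq:T_tau_def}.

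Finally, the minimum is attained, by $N'=N'_{s^\star}$ and $N=N'+s^\star$, which are feasible by the first step. This shows that $T_{\boldsymbol{\tau}}(\mathbf n,\mathbf n')$ is the smallest possible value of $\max\{N,N'\}$, completing the proof.
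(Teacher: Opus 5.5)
Your proposal is correct and follows essentially the same route as the paper's proof. You fix a feasible $s=N-N'$ and take $N'=\max\{\alpha_M(\mathbf n'),\alpha_M(\mathbf n)-s\}$, which gives the objective $\max\{\alpha_M(\mathbf n'),\alpha_M(\mathbf n)-s\}+\max\{s,0\}$. You then use its monotonicity on the one-signed interval $[\ell_{\boldsymbol{\tau},M}(\mathbf n,\mathbf n'),u_{\boldsymbol{\tau},M}(\mathbf n,\mathbf n')]_{\mathbb Z}$ to reach $s^\star_{\boldsymbol{\tau}}(\mathbf n,\mathbf n')$. Two of your checks fill in details the paper leaves implicit: that this choice of $N'$ also respects the upper endpoints $\beta_M$, and that the error construction in Lemma~\ref{lem:interval_overlap} applies to every particular pair with difference $s$.
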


\begin{proof}
Let $s=N-N'$. By Lemma~\ref{lem:interval_overlap}, the feasible values of
$s$ that can produce an ambiguity for this pair $(\mathbf n,\mathbf n')$ are exactly the
interval in \eqref{eq:feasible_s_interval_M}.

Fix one feasible value of $s$. Write $N'=y$ and $N=y+s$. The constraints
$N'\in\mathcal I_M(\mathbf n')$ and $N\in\mathcal I_M(\mathbf n)$ become
\[
y\in\mathcal I_M(\mathbf n'),
\qquad
y+s\in\mathcal I_M(\mathbf n).
\]
Since
\[
\mathcal I_M(\mathbf n)
=
[\alpha_M(\mathbf n),\beta_M(\mathbf n)]_{\mathbb Z},
\,
\mathcal I_M(\mathbf n')
=
[\alpha_M(\mathbf n'),\beta_M(\mathbf n')]_{\mathbb Z},
\]
the smallest feasible value of $y$ is
\[
y_{\min}(s)
=
\max\{\alpha_M(\mathbf n'),\alpha_M(\mathbf n)-s\}.
\]
Choosing $y=y_{\min}(s)$, or equivalently
$N'=y_{\min}(s)$ and $N=y_{\min}(s)+s$, gives
\[
\begin{aligned}
\max\{N,N'\}
&=
\max\{y_{\min}(s)+s,\ y_{\min}(s)\} \\
&=
y_{\min}(s)+\max\{s,0\}.
\end{aligned}
\]
Therefore, for a fixed feasible $s$, the smallest possible value of
$\max\{N,N'\}$ is
\begin{equation}\label{eq:minmax_for_fixed_s}
\max\{\alpha_M(\mathbf n'),\alpha_M(\mathbf n)-s\}
+
\max\{s,0\}.
\end{equation}
The quantity in \eqref{eq:minmax_for_fixed_s} is nonincreasing as a negative
feasible $s$ moves toward zero and is nondecreasing as a positive feasible
$s$ moves away from zero. Hence
the minimum over the feasible interval in \eqref{eq:feasible_s_interval_M} is
attained at the feasible value closest to zero, namely
$s_{\boldsymbol{\tau}}^\star(\mathbf n,\mathbf n')$. Substituting this value
into \eqref{eq:minmax_for_fixed_s} gives \eqref{eq:T_tau_def}. This proves
the lemma.
\end{proof}

Define
\begin{equation}\label{eq:conflicting_ordered_pair_set}
\mathcal P_{\boldsymbol{\tau}}
=
\left\{
(\mathbf n,\mathbf n'):
\begin{array}{l}
\mathbf n,\mathbf n'\in\mathcal F(M),\quad
\mathbf n\ne\mathbf n',\\
\ell_{\boldsymbol{\tau},M}(\mathbf n,\mathbf n')
\le
u_{\boldsymbol{\tau},M}(\mathbf n,\mathbf n')
\end{array}
\right\}.
\end{equation}
Thus, $\mathcal P_{\boldsymbol{\tau}}$ is the set of pairs $(\mathbf n,\mathbf n')$ of folding integer vectors that can generate an ambiguity under $\boldsymbol{\tau}$ within
$[0,M-1]_{\mathbb Z}$.

\begin{thm}[Dynamic range for a given remainder error bound vector]
\label{thm:exact_D_tau}
For any remainder error bound vector $\boldsymbol{\tau}$,
\begin{equation}\label{eq:D_tau_earliest_conflict}
D(\boldsymbol{\tau})
=
\begin{cases}
M,
&
\mathcal P_{\boldsymbol{\tau}}=\emptyset,\\[1mm]
\displaystyle
\min_{(\mathbf n,\mathbf n')\in\mathcal P_{\boldsymbol{\tau}}}
T_{\boldsymbol{\tau}}(\mathbf n,\mathbf n'),
&
\mathcal P_{\boldsymbol{\tau}}\ne\emptyset.
\end{cases}
\end{equation}
\end{thm}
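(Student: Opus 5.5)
The plan is to bypass the margin $\Delta(\boldsymbol{\tau},D)$ and work directly with pairs of integers. We show that a range length $D$ is admissible if and only if $D\le T^\star$, where $T^\star=\min_{(\mathbf n,\mathbf n')\in\mathcal P_{\boldsymbol{\tau}}}T_{\boldsymbol{\tau}}(\mathbf n,\mathbf n')$, with $T^\star=+\infty$ if $\mathcal P_{\boldsymbol{\tau}}=\emptyset$. Then $D(\boldsymbol{\tau})=\min\{T^\star,M\}$, and it remains to check that $T^\star\le M$ whenever $\mathcal P_{\boldsymbol{\tau}}\neq\emptyset$.

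\textbf{Step 1: restrict intervals to $D$.} For $1\le D\le M$ and $\mathbf n\in\mathcal F(D)$, definitions \eqref{eq:folding_set_def} and \eqref{eq:folding_set_intersection} give
$\mathcal I_D(\mathbf n)=\mathcal I_M(\mathbf n)\cap[0,D-1]_{\mathbb Z}$. Also $\mathcal F(D)\subseteq\mathcal F(M)$.

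\textbf{Step 2: ambiguity within $[0,D-1]$ in terms of pairs of integers.} By Lemma~\ref{lem:observation_set_criterion}, together with the observation-set definition \eqref{eq:observation_set_def}, $D$ fails to be admissible if and only if there exist $N,N'\in[0,D-1]_{\mathbb Z}$ with $\mathbf n(N)\ne\mathbf n(N')$ and errors $\mathbf e,\mathbf e'\in\mathcal E_{\boldsymbol{\tau}}$ such that $\mathbf r(N)+\mathbf e=\mathbf r(N')+\mathbf e'$. Call such a pair $N,N'$ an \emph{ambiguous pair}.

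\textbf{Step 3: link ambiguous pairs to $T^\star$.} Set $\mathbf n=\mathbf n(N)$ and $\mathbf n'=\mathbf n(N')$ for an ambiguous pair. Then $N\in\mathcal I_M(\mathbf n)$ and $N'\in\mathcal I_M(\mathbf n')$. By Lemma~\ref{lem:interval_overlap} applied at range $M$, the pair $(\mathbf n,\mathbf n')$ lies in $\mathcal P_{\boldsymbol{\tau}}$. By Lemma~\ref{lem:first_ambiguity_pair}, $\max\{N,N'\}\ge T_{\boldsymbol{\tau}}(\mathbf n,\mathbf n')\ge T^\star$.

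Conversely, if $\mathcal P_{\boldsymbol{\tau}}\ne\emptyset$, pick a pair attaining $T^\star$. Lemma~\ref{lem:first_ambiguity_pair} supplies integers $N\in\mathcal I_M(\mathbf n)$ and $N'\in\mathcal I_M(\mathbf n')$ that generate a common erroneous remainder vector and satisfy $\max\{N,N'\}=T^\star$. These have distinct folding vectors, so they form an ambiguous pair.

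Hence an ambiguous pair exists inside $[0,D-1]_{\mathbb Z}$ if and only if $T^\star\le D-1$. Equivalently, $D$ is admissible if and only if $D\le T^\star$. This also re-derives the monotonicity of Theorem~\ref{thm:margin_monotonicity}(i), without using it.

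\textbf{Step 4: assemble the cases.}
\begin{itemize}
\item If $\mathcal P_{\boldsymbol{\tau}}=\emptyset$, every $D\le M$ is admissible. Since $D(\boldsymbol{\tau})\le M$, we get $D(\boldsymbol{\tau})=M$.
\item If $\mathcal P_{\boldsymbol{\tau}}\neq\emptyset$, the witnesses from Step 3 satisfy $N,N'\le M-1$, so $T^\star\le M-1<M$. Also $N\ne N'$ and both are nonnegative, so $T^\star\ge1$. Thus $T^\star\in[1,M]_{\mathbb Z}$, and the largest admissible length is exactly $T^\star$.
\end{itemize}
This gives \eqref{eq:D_tau_earliest_conflict}.

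The main obstacle is Step 3, specifically checking that Lemma~\ref{lem:first_ambiguity_pair}'s minimizer is a genuine ambiguity in the physical sense. This needs two facts. First, the sets $\mathcal I_M(\mathbf n)$ and $\mathcal I_M(\mathbf n')$ must be exactly where the folding vectors equal $\mathbf n$ and $\mathbf n'$; this is immediate from their definition. Second, an overlap of the unrestricted observation sets must imply an overlap inside $\mathcal A$. This is handled by the first part of the proof of Lemma~\ref{lem:observation_set_criterion}, so I would cite that argument rather than redo it.
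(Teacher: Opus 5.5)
Your proposal is correct and takes essentially the same route as the paper. Both use Lemmas~\ref{lem:observation_set_criterion}--\ref{lem:first_ambiguity_pair} to identify $\min_{\mathcal P_{\boldsymbol{\tau}}}T_{\boldsymbol{\tau}}$ as the smallest $\max\{N,N'\}$ over ambiguous pairs, and conclude $D(\boldsymbol{\tau})$ from that; the only difference is that you prove ``$D$ admissible iff $D\le T^\star$'' for every $D$ directly (and make the bounds $1\le T^\star\le M-1$ explicit), whereas the paper checks admissibility at $K$, failure at $K+1$, and then appeals to Theorem~\ref{thm:margin_monotonicity}.
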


\begin{proof}
If $\mathcal P_{\boldsymbol{\tau}}=\emptyset$, then no two distinct folding
integer vectors can produce overlapping observation sets within
$[0,M-1]_{\mathbb Z}$. By Lemma~\ref{lem:observation_set_criterion}, every
integer in $[0,M-1]_{\mathbb Z}$ can be robustly determined. Since no range
length larger than $M$ is possible even in the error-free case, we have
$D(\boldsymbol{\tau})=M$.

Now suppose that $\mathcal P_{\boldsymbol{\tau}}\ne\emptyset$, and define
\[
K
=
\min_{(\mathbf n,\mathbf n')\in\mathcal P_{\boldsymbol{\tau}}}
T_{\boldsymbol{\tau}}(\mathbf n,\mathbf n').
\]
By Lemma~\ref{lem:first_ambiguity_pair}, $K$ is the smallest possible value
of $\max\{N,N'\}$ over all pairs of integers in $[0,M-1]_{\mathbb Z}$ with
different folding integer vectors that can generate the same erroneous
remainder vector under $\boldsymbol{\tau}$. Hence no such pair is contained
in $[0,K-1]_{\mathbb Z}$, while at least one such pair is contained in
$[0,K]_{\mathbb Z}$. Therefore the robust determination holds for range length
$K$ and fails for range length $K+1$. By the monotonicity in
Theorem~\ref{thm:margin_monotonicity}, the dynamic range is
$D(\boldsymbol{\tau})=K$, which proves \eqref{eq:D_tau_earliest_conflict}.
\end{proof}

The following example shows how different remainder error bound vectors can lead to different dynamic ranges.

\begin{example}
Let $(m_1,m_2,m_3)=(6,9,15)$. Then
\[
m=\gcd(6,9,15)=3,
\qquad
M=\lcm(6,9,15)=90.
\]
Using Theorem~\ref{thm:exact_D_tau}, we obtain the following dynamic ranges $D(\boldsymbol{\tau})$ for different remainder error bound vectors $\boldsymbol{\tau}$:
\[
\begin{aligned}
D(1,1,1)&=18,
&
D(1,1,2)&=15,
&
D(1,2,1)&=15,\\
D(2,1,1)&=6,
&
D(2,2,2)&=6.
\end{aligned}
\]

This example illustrates how the dynamic range changes with the remainder error bound
vector. First, the two non-uniform bounds $(1,1,2)$ and $(1,2,1)$ lie between
the two uniform bounds $(1,1,1)$ and $(2,2,2)$.
Their dynamic ranges are also between the two corresponding uniform dynamic
ranges:
\[
D(2,2,2)
<
D(1,1,2)
=
D(1,2,1)
<
D(1,1,1).
\]
Thus, a non-uniform error bound can give an intermediate dynamic range that
is not captured by using only one uniform error level.

Second, individual remainder error bounds for different moduli matter. The three remainder error bound vectors
$(1,1,2)$, $(1,2,1)$, and $(2,1,1)$ have the same maximum component, but they
do not give the same dynamic range.
Therefore, $D(\boldsymbol{\tau})$ depends on the full error bound vector
$\boldsymbol{\tau}$, not only on $\max_i\tau_i$.
\end{example}

\subsection{Tolerable Remainder Error Bounds for a Given Range}\label{s3_sub5}

We now fix a range length $D$ and characterize the remainder error bound vectors that
can be tolerated on $[0,D-1]_{\mathbb Z}$. By
Theorem~\ref{thm:exact_fixed_D_tradeoff}, a vector $\boldsymbol{\tau}$ is
tolerable for this range if and only if
$\Delta(\boldsymbol{\tau},D)>0$. Define
\begin{equation}\label{eq:tolerable_region_def}
\mathcal T(D)
=
\left\{
\boldsymbol{\tau}\in\mathbb Z_{\ge0}^L:
\Delta(\boldsymbol{\tau},D)>0
\right\}.
\end{equation}
Thus, $\mathcal T(D)$ is the set of all remainder error bound vectors under which every
integer in $[0,D-1]_{\mathbb Z}$ can be robustly determined. By Theorem~\ref{thm:margin_monotonicity}, if
$\boldsymbol{\tau}\in\mathcal T(D)$ and
$\boldsymbol{\eta}\preceq\boldsymbol{\tau}$, then
$\boldsymbol{\eta}\in\mathcal T(D)$.

To describe $\mathcal T(D)$, we examine one possible way in which robust
determination can fail. Fix two distinct folding integer vectors
$\mathbf n,\mathbf n'\in\mathcal F(D)$ and a value
$s\in\mathcal J_D(\mathbf n,\mathbf n')$. This means that there exist
integers
\[
N\in\mathcal I_D(\mathbf n),
\qquad
N'\in\mathcal I_D(\mathbf n')
\]
such that $s=N-N'\neq 0$. For a given remainder error bound vector $\boldsymbol{\tau}$, this
triple $(\mathbf n,\mathbf n',s)$ can make the two observation sets overlap
if and only if
\begin{equation}\label{eq:s_overlap_condition_tau}
|s-c_i(\mathbf n,\mathbf n')|
\le
2\tau_i,
\qquad
i=1,\ldots,L,
\end{equation}
which is exactly the condition that
$s\in\mathcal W_{\boldsymbol{\tau}}(\mathbf n,\mathbf n')$.

For this fixed triple $(\mathbf n,\mathbf n',s)$, define
\begin{equation}\label{eq:lambda_forbidden_def}
\boldsymbol{\lambda}_{\mathbf n,\mathbf n',s}
=
(\lambda_1,\ldots,\lambda_L),
\qquad
\lambda_i
=
\left\lceil
\frac{|s-c_i(\mathbf n,\mathbf n')|}{2}
\right\rceil.
\end{equation}
Since $\tau_i$ is an integer, \eqref{eq:s_overlap_condition_tau} holds if and
only if
\begin{equation}\label{eq:lambda_dominated_condition}
\boldsymbol{\lambda}_{\mathbf n,\mathbf n',s}
\preceq
\boldsymbol{\tau}.
\end{equation}
Thus, for the fixed triple $(\mathbf n,\mathbf n',s)$, the error bound
vectors $\boldsymbol{\tau}$ for which the corresponding observation sets
$\mathcal O_{\boldsymbol{\tau},D}(\mathbf n)$ and
$\mathcal O_{\boldsymbol{\tau},D}(\mathbf n')$ overlap through the difference
$s$ are exactly those satisfying
$\boldsymbol{\lambda}_{\mathbf n,\mathbf n',s}\preceq\boldsymbol{\tau}$.

Collect all vectors of the form
$\boldsymbol{\lambda}_{\mathbf n,\mathbf n',s}$ in the finite set
\begin{equation}\label{eq:Lambda_D_def}
\Lambda_D
=
\left\{
\boldsymbol{\lambda}_{\mathbf n,\mathbf n',s}:
\mathbf n,\mathbf n'\in\mathcal F(D),\
\mathbf n\ne\mathbf n',\
s\in\mathcal J_D(\mathbf n,\mathbf n')
\right\}.
\end{equation}
This set may contain redundant vectors. We therefore keep only its minimal
elements under the componentwise order and define
\begin{equation}\label{eq:Lambda_min_def}
\Lambda_D^{\min}
=
\left\{
\boldsymbol{\lambda}\in\Lambda_D:
\text{there is no }
\boldsymbol{\mu}\in\Lambda_D
\text{ such that }
\boldsymbol{\mu}\prec\boldsymbol{\lambda}
\right\},
\end{equation}
where $\boldsymbol{\mu}\prec\boldsymbol{\lambda}$ means
$\boldsymbol{\mu}\preceq\boldsymbol{\lambda}$ and
$\boldsymbol{\mu}\ne\boldsymbol{\lambda}$.
Thus, no two distinct elements of $\Lambda_D^{\min}$ are comparable under
$\preceq$.

For any set $\Lambda\subseteq\mathbb Z_{\ge0}^L$, define
\begin{equation}\label{eq:upper_closure_def}
\mathcal U(\Lambda)
=
\left\{
\boldsymbol{\tau}\in\mathbb Z_{\ge0}^L:
\boldsymbol{\lambda}\preceq\boldsymbol{\tau}
\text{ for some }
\boldsymbol{\lambda}\in\Lambda
\right\}.
\end{equation}
Then, $\mathcal U(\Lambda_D^{\min})$ contains exactly those remainder error bound vectors
that dominate at least one vector in $\Lambda_D^{\min}$.
The set $\mathcal U(\Lambda_D^{\min})$ gives the following characterization of $\mathcal T(D)$.

\begin{thm}[Exact characterization of tolerable remainder error bound vectors]
\label{thm:tolerable_error_bounds}
For a fixed $D$ with $1\le D\le M$,
\begin{equation}\label{eq:tolerable_region_exact}
\mathcal T(D)
=
\mathbb Z_{\ge0}^L
\setminus
\mathcal U(\Lambda_D^{\min}).
\end{equation}
Equivalently,
\begin{equation}\label{eq:tolerable_region_equiv}
\mathcal T(D)
=
\left\{
\boldsymbol{\tau}\in\mathbb Z_{\ge0}^L:
\boldsymbol{\lambda}\npreceq\boldsymbol{\tau}
\text{ for every }
\boldsymbol{\lambda}\in\Lambda_D^{\min}
\right\}.
\end{equation}
\end{thm}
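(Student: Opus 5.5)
The plan is to prove the first form \eqref{eq:tolerable_region_exact} by chaining the earlier results, and then derive \eqref{eq:tolerable_region_equiv} by simply unfolding the definition of $\mathcal U(\cdot)$.

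By Theorem~\ref{thm:exact_fixed_D_tradeoff} together with Lemma~\ref{lem:observation_set_criterion} and Lemma~\ref{lem:interval_overlap}, $\boldsymbol{\tau}\notin\mathcal T(D)$ holds if and only if there exist distinct $\mathbf n,\mathbf n'\in\mathcal F(D)$ such that $\mathcal J_D(\mathbf n,\mathbf n')\cap\mathcal W_{\boldsymbol{\tau}}(\mathbf n,\mathbf n')\neq\emptyset$. Equivalently, there is a triple $(\mathbf n,\mathbf n',s)$ with $s\in\mathcal J_D(\mathbf n,\mathbf n')$ satisfying \eqref{eq:s_overlap_condition_tau}. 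By the equivalence \eqref{eq:s_overlap_condition_tau}$\Leftrightarrow$\eqref{eq:lambda_dominated_condition}, this in turn holds if and only if $\boldsymbol{\lambda}_{\mathbf n,\mathbf n',s}\preceq\boldsymbol{\tau}$ for some element of $\Lambda_D$, i.e., $\boldsymbol{\tau}\in\mathcal U(\Lambda_D)$. I will check the integer rounding step $|x|\le 2\tau_i\iff\lceil |x|/2\rceil\le\tau_i$ explicitly, since it is the only arithmetic point. In the degenerate case $|\mathcal F(D)|=1$, we have $\Delta=+\infty$, $\Lambda_D=\emptyset$, and $\mathcal U(\emptyset)=\emptyset$, so both sides equal $\mathbb Z_{\ge0}^L$. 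This gives $\mathcal T(D)=\mathbb Z_{\ge0}^L\setminus\mathcal U(\Lambda_D)$.

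It remains to show $\mathcal U(\Lambda_D)=\mathcal U(\Lambda_D^{\min})$. The inclusion $\supseteq$ is immediate because $\Lambda_D^{\min}\subseteq\Lambda_D$. For $\subseteq$, the key observation is that $\Lambda_D$ is finite: there are finitely many pairs in $\mathcal F(D)$, and each $\mathcal J_D$ is a finite interval. So for any $\boldsymbol{\lambda}\in\Lambda_D$, the set $\{\boldsymbol{\mu}\in\Lambda_D:\boldsymbol{\mu}\preceq\boldsymbol{\lambda}\}$ is finite and nonempty. An element of this set minimizing $\sum_i\mu_i$ has no strictly smaller element of $\Lambda_D$ below it, since such an element would have strictly smaller coordinate sum and also lie below $\boldsymbol{\lambda}$ by transitivity. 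That minimizer therefore lies in $\Lambda_D^{\min}$ and is dominated by $\boldsymbol{\tau}$ whenever $\boldsymbol{\lambda}$ is. This well-foundedness step is the only mildly nontrivial part; it relies on finiteness (or on $\mathbb Z_{\ge0}^L$ being well-founded) rather than on any CRT structure.

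Finally, \eqref{eq:tolerable_region_equiv} is just the complement of \eqref{eq:upper_closure_def} written out: $\boldsymbol{\tau}\notin\mathcal U(\Lambda_D^{\min})$ if and only if no $\boldsymbol{\lambda}\in\Lambda_D^{\min}$ satisfies $\boldsymbol{\lambda}\preceq\boldsymbol{\tau}$.
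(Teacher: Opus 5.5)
Your proposal is correct and follows the paper's proof. Like the paper, you first reduce non-tolerability to $\boldsymbol{\tau}$ dominating some element of $\Lambda_D$, using Lemma~\ref{lem:interval_overlap} and the ceiling equivalence \eqref{eq:lambda_dominated_condition}, then pass from $\Lambda_D$ to $\Lambda_D^{\min}$ by finiteness and take complements. Your coordinate-sum argument for finding a minimal element below each $\boldsymbol{\lambda}\in\Lambda_D$, and your check of the case $|\mathcal F(D)|=1$, only make explicit details the paper leaves implicit.
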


\begin{proof}
We first characterize when a remainder error bound vector is not tolerable. By
Lemma~\ref{lem:interval_overlap}, $\boldsymbol{\tau}\notin\mathcal T(D)$ if
and only if there exist two distinct folding integer vectors
$\mathbf n,\mathbf n'\in\mathcal F(D)$ and a non-zero value
$s\in\mathcal J_D(\mathbf n,\mathbf n')$ such that
$s\in\mathcal W_{\boldsymbol{\tau}}(\mathbf n,\mathbf n')$. By
\eqref{eq:lambda_dominated_condition}, this is equivalent to the existence
of some $\boldsymbol{\lambda}_{\mathbf n,\mathbf n',s}\in\Lambda_D$ such that
$\boldsymbol{\lambda}_{\mathbf n,\mathbf n',s}\preceq\boldsymbol{\tau}$.
Thus, $\boldsymbol{\tau}$ is not tolerable if and only if it dominates at
least one vector in $\Lambda_D$.

It remains to remove the redundant vectors in $\Lambda_D$. Since
$\Lambda_D$ is finite, every vector in $\Lambda_D$ dominates at least one
vector in $\Lambda_D^{\min}$. Therefore, $\boldsymbol{\tau}$ dominates a
vector in $\Lambda_D$ if and only if it dominates a vector in
$\Lambda_D^{\min}$. Hence the non-tolerable error bound vectors are exactly
the vectors in $\mathcal U(\Lambda_D^{\min})$. Taking the complement in
$\mathbb Z_{\ge0}^L$ gives
\[
\mathcal T(D)
=
\mathbb Z_{\ge0}^L
\setminus
\mathcal U(\Lambda_D^{\min}).
\]
The equivalent form in \eqref{eq:tolerable_region_equiv} follows directly
from the definition of $\mathcal U(\Lambda_D^{\min})$.
\end{proof}

Theorem~\ref{thm:tolerable_error_bounds} characterizes the whole set
$\mathcal T(D)$ by the finite boundary set $\Lambda_D^{\min}$. To test whether a remainder error bound vector
$\boldsymbol{\tau}$ can be tolerated on $[0,D-1]_{\mathbb Z}$, it is enough
to check whether it dominates any vector in the finite set
$\Lambda_D^{\min}$. If it dominates any element in $\Lambda_D^{\min}$, then it causes an ambiguity;
otherwise, it can be tolerated.

The following example illustrates how Theorem~\ref{thm:tolerable_error_bounds}
describes the remainder error bound vectors that can be tolerated for a fixed range
length.

\begin{example}\label{ex:tolerable_region_example}
Let $(m_1,m_2,m_3)=(6,10,15)$. Then $M=\lcm(6,10,15)=30$. Fix the range length $D=12$. The folding integer vectors of the integers in $[0,11]_{\mathbb Z}$ are
\[
\mathcal F(12)=\{(0,0,0),(1,0,0),(1,1,0)\}.
\]
The corresponding integer sets are $\mathcal I_{12}(0,0,0)=[0,5]_{\mathbb Z}$, $\mathcal I_{12}(1,0,0)=[6,9]_{\mathbb Z}$, and $\mathcal I_{12}(1,1,0)=[10,11]_{\mathbb Z}$. Using \eqref{eq:lambda_forbidden_def} and \eqref{eq:Lambda_min_def}, we obtain
\begin{equation}\label{eq:example_Lambda_min}
\begin{aligned}
\Lambda_{12}^{\min}
=
\{&
(0,2,3),\,
(1,1,4),\,
(1,2,2),\\
&
(1,4,1),\,
(2,0,5),\,
(2,1,1)
\}.
\end{aligned}
\end{equation}
By Theorem~\ref{thm:tolerable_error_bounds}, a remainder error bound vector $\boldsymbol{\tau}=(\tau_1,\tau_2,\tau_3)$ can be tolerated on $[0,11]_{\mathbb Z}$ if and only if it does not dominate any vector in $\Lambda_{12}^{\min}$. Equivalently,
\begin{equation}\label{eq:example_T_12}
\mathcal T(12)
=
\left\{
\boldsymbol{\tau}\in\mathbb Z_{\ge0}^3:
\boldsymbol{\lambda}\npreceq\boldsymbol{\tau}
\text{ for every }
\boldsymbol{\lambda}\in\Lambda_{12}^{\min}
\right\}.
\end{equation}

Fig.~\ref{fig:tolerable_region_slices} shows slices of $\mathcal T(12)$ inside the box $[0,5]_{\mathbb Z}^3$. Each panel fixes $\tau_3$ and displays the tolerable and not tolerable points in the $(\tau_1,\tau_2)$ plane. The figure gives a direct visual description of the set in \eqref{eq:example_T_12}.

The set in \eqref{eq:example_Lambda_min} also gives a simple way to check individual vectors. For instance, $(0,1,5)$ and $(1,0,5)$ belong to $\mathcal T(12)$ because neither of them dominates any vector in $\Lambda_{12}^{\min}$. However, $(1,1,5)\notin\mathcal T(12)$ because it dominates $(1,1,4)\in\Lambda_{12}^{\min}$. Therefore, the tolerable error bound vectors cannot be described by independent upper bounds on $\tau_1,\tau_2,\tau_3$. Although $(0,1,5)$ and $(1,0,5)$ can both be tolerated, increasing both the first and second components to obtain $(1,1,5)$ makes the vector not tolerable.

This example also shows that the largest component of $\boldsymbol{\tau}$ alone is not enough to determine the error tolerability. The remainder error bound vectors $(0,1,5)$, $(1,0,5)$, and $(1,1,5)$ all have maximum component $5$, but only the first two can be tolerated. In contrast, the uniform vector $(2,2,2)$ is not tolerable, since it dominates $(1,2,2)\in\Lambda_{12}^{\min}$. Hence the largest tolerable uniform integer remainder error bound for $D=12$ is $1$, while the non-uniform remainder error bound vector $(0,1,5)$ can still be tolerated. Thus, for a fixed range length, the full remainder error bound vector $\boldsymbol{\tau}$ matters. A scalar metric such as $\max_i\tau_i$ is not enough.
\end{example}

\begin{figure}[htbp]
\centering
\includegraphics[width=0.4\textwidth]{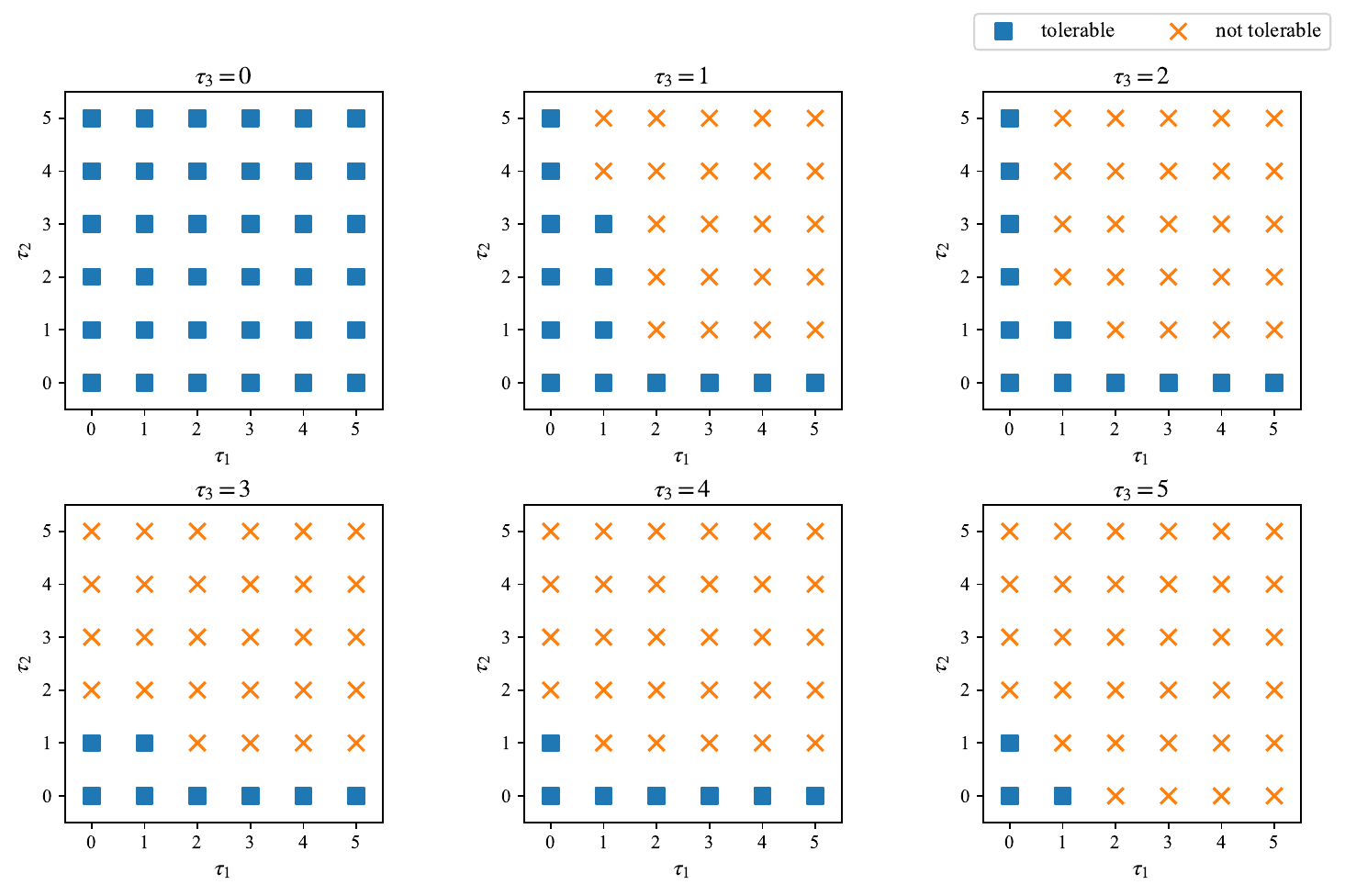}
\caption{Slices of the tolerable remainder error bound vector set $\mathcal T(12)$ for $(m_1,m_2,m_3)=(6,10,15)$ inside $[0,5]_{\mathbb Z}^3$. Each panel fixes $\tau_3$ and displays the tolerable and not tolerable points in the $(\tau_1,\tau_2)$ plane.}
\label{fig:tolerable_region_slices}
\end{figure}

\section{Decoding Algorithm}
\label{s4}

In this section, we present a decoding algorithm for the folding integer vector. 
The range length $D$ and the remainder error bound vector $\boldsymbol{\tau}$ are assumed to be given and satisfy $\Delta(\boldsymbol{\tau},D)>0$.

The key idea is the following. Given an erroneous remainder vector
$\widetilde{\mathbf r}$, the algorithm searches for an integer
$a\in[0,D-1]_{\mathbb Z}$ that is feasible for this observation $\widetilde{\mathbf r}$ under the
remainder error bound vector $\boldsymbol{\tau}$. Here, ``feasible'' means that
there exists an error vector
$\mathbf e^{(a)}=(e_1^{(a)},\ldots,e_L^{(a)})\in\mathcal E_{\boldsymbol{\tau}}$
such that
\[
\widetilde r_i=(a\bmod m_i)+e_i^{(a)},
\qquad i=1,\ldots,L.
\]
Equivalently,
\[
|\widetilde r_i-(a\bmod m_i)|\le \tau_i,
\qquad i=1,\ldots,L.
\]
A feasible integer may not be unique. However, when
$\Delta(\boldsymbol{\tau},D)>0$, all feasible integers have the same folding
integer vector. Therefore, once the algorithm finds one feasible integer, the
folding integer vector of this integer is the desired output.

For each erroneous remainder $\widetilde r_i$, define the set 
\begin{equation}\label{eq:possible_true_remainder_set}
\mathcal R_i
=
[\widetilde r_i-\tau_i,\widetilde r_i+\tau_i]_{\mathbb Z}
\cap
[0,m_i-1]_{\mathbb Z},
\qquad i=1,\ldots,L.
\end{equation}
Then an integer $a\in[0,D-1]_{\mathbb Z}$ is feasible for
$\widetilde{\mathbf r}$ under the remainder error bound vector $\boldsymbol{\tau}$ if and only if
\begin{equation}\label{eq:integer_candidate_constraints}
a\bmod m_i\in\mathcal R_i,
\qquad \text{for all } i=1,\ldots,L.
\end{equation}
Therefore, to find such an integer, we may choose one 
from each set $\mathcal R_i$ and check whether these chosen remainders can
come from a common integer in $[0,D-1]_{\mathbb Z}$. More precisely, for a tuple
\[
(\rho_1,\ldots,\rho_L)
\in
\mathcal R_1\times\cdots\times\mathcal R_L,
\]
we consider the congruence system
\begin{equation}\label{eq:chosen_remainder_crt_system}
x\equiv \rho_i \mod {m_i},
\qquad i=1,\ldots,L.
\end{equation}
If this system has a solution, then its solution is unique modulo
$M=\lcm(m_1,\ldots,m_L)$. Let $a\in[0,M-1]_{\mathbb Z}$ be the unique
solution. If $a<D$, then $a$ is
feasible for the observation $\widetilde{\mathbf r}$ under
$\boldsymbol{\tau}$, or equivalently,
$\widetilde{\mathbf r}
\in
\mathcal O_{\boldsymbol{\tau},D}(\mathbf n(a))$.
If the true integer is $N \in[0,D-1]_{\mathbb Z}$, then also
$\widetilde{\mathbf r}
\in
\mathcal O_{\boldsymbol{\tau},D}(\mathbf n(N))$.
Since $\Delta(\boldsymbol{\tau},D)>0$, observation sets corresponding to
distinct folding integer vectors are disjoint. Therefore,
$\mathbf n(a)=\mathbf n(N)$, and the folding integer vector of $a$ is the
desired output.

It remains to note that, for general moduli, the system
\eqref{eq:chosen_remainder_crt_system} may not have a solution. 
This is because the full remainder space has size $\prod_{i=1}^L m_i$, while the number of remainder vectors coming from actual integers modulo the lcm $M=\lcm(m_1,\ldots,m_L)$ is only $M$.
Hence, when moduli $m_i$, $1\leq i \leq L$, are not pairwise co-prime, then $M< \prod_{i=1}^L m_i$. 
Thus, not every remainder tuple in the remainder space is a valid vector of remainders. By the general Chinese remainder theorem \cite{Ore52}, the system
\eqref{eq:chosen_remainder_crt_system} is solvable if and only if
\begin{equation}\label{eq:general_crt_compatibility}
\rho_i\equiv \rho_j \mod{\gcd(m_i,m_j)},
\qquad
1\le i,j\le L.
\end{equation}
When this condition holds, the solution is unique modulo
$M=\lcm(m_1,\ldots,m_L)$.

In an implementation, one does not need to check all pairwise conditions in
\eqref{eq:general_crt_compatibility} separately before applying the CRT.
Instead, the congruences can be merged successively. Suppose that after some
indices have been merged, we have a congruence
\[
x\equiv c \mod{M_c}.
\]
To merge it with the next congruence $x\equiv \rho_i \mod {m_i}$, we only need to check
\[
c\equiv \rho_i \mod{\gcd(M_c,m_i)}.
\]
If this condition fails, then the current tuple is invalid and cannot correspond
to any integer. If it holds, the two congruences are merged into one congruence
modulo $\lcm(M_c,m_i)$. Repeating this process either discards the tuple at some
merge step or produces an integer $a\in[0,M-1]_{\mathbb Z}$. We then check
whether this integer lies in $[0,D-1]_{\mathbb Z}$.

To reduce the searching complexity, we choose an index subset
$\mathcal S\subseteq\{1,\ldots,L\}$ whose moduli satisfy
\[
M_{\mathcal S}\triangleq \lcm\{m_i:i\in\mathcal S\}\ge D,
\]
and for which the product $\prod_{i\in\mathcal S}|\mathcal R_i|$ is small.
The condition $M_{\mathcal S}\ge D$ ensures that the remainders on the indices
in $\mathcal S$ determine at most one integer in $[0,D-1]_{\mathbb Z}$.
Therefore, after a remainder tuple on $\mathcal S$ is fixed, solving the subsystem
of \eqref{eq:chosen_remainder_crt_system} restricted to $i\in\mathcal S$ gives
either no candidate in $[0,D-1]_{\mathbb Z}$ or a unique candidate integer.
More precisely, if the subsystem has no solution, then the tuple is discarded.
If it has a solution, let $a\in[0,M_{\mathcal S}-1]_{\mathbb Z}$ be the
integer obtained from the CRT. If $a\ge D$, then it is outside the
candidate range and is discarded. If $a<D$, then $a$ is a candidate obtained from the indices in
$\mathcal S$. We then verify it on the remaining indices by checking whether
$a\bmod m_i\in\mathcal R_i$,
for all $i\in\{1,\ldots,L\}\setminus\mathcal S$.
If this check holds, then $a$ is feasible for
$\widetilde{\mathbf r}$ under the remainder error bound vector $\boldsymbol{\tau}$, and the folding integer vector of $a$ is
output. Otherwise, $a$ is discarded and the algorithm continues to the next
tuple. Choosing $\mathcal S$ with a small product
$\prod_{i\in\mathcal S}|\mathcal R_i|$ reduces the number of remainder tuples that
need to be tested.

The complete procedure is summarized in Algorithm~\ref{alg:exact_decoder}. 
The correctness follows from two observations. First, suppose the observed
remainder vector $\widetilde{\mathbf r}$ is produced from some
$N\in[0,D-1]_{\mathbb Z}$ with remainder errors bounded by
$\boldsymbol{\tau}$. Then the true remainders of $N$ on the selected indices in
$\mathcal S$ belong to the corresponding sets $\mathcal R_i$, and hence appear
among the enumerated remainder tuples. Therefore, the algorithm can
find at least one feasible integer in $[0,D-1]_{\mathbb Z}$ for the observation
$\widetilde{\mathbf r}$. Second, when $\Delta(\boldsymbol{\tau},D)>0$, any two
feasible integers in $[0,D-1]_{\mathbb Z}$ for the same observation must have
the same folding integer vector, since otherwise the same observation would
belong to two observation sets corresponding to distinct folding integer
vectors. Therefore, the folding integer vector of the first feasible candidate
found by the algorithm is correct. This gives the following result.

\begin{algorithm}[t]
\caption{Folding integer vector decoding algorithm}
\label{alg:exact_decoder}
\begin{algorithmic}[1]
\REQUIRE Moduli $m_1,\ldots,m_L$, range length $D$, remainder error bound vector
$\boldsymbol{\tau}$, and erroneous remainder vector $\widetilde{\mathbf r}$.
\ENSURE A folding integer vector, or the empty set.

\STATE Form the sets $\mathcal R_i$ in
\eqref{eq:possible_true_remainder_set}, $i=1,\ldots,L$.

\STATE Choose a subset $\mathcal S\subseteq\{1,\ldots,L\}$ such that
$M_{\mathcal S}\triangleq \lcm\{m_i:i\in\mathcal S\}\ge D$, with small product
$\prod_{i\in\mathcal S}|\mathcal R_i|$.

\FOR{each tuple $(\rho_i)_{i\in\mathcal S}\in\prod_{i\in\mathcal S}\mathcal R_i$}

    \IF{the subsystem $x\equiv \rho_i \mod {m_i}$, $i\in\mathcal S$, has no integer solution}
        \STATE Continue to the next tuple.
    \ENDIF

    \STATE Solve the subsystem by the CRT and let $a\in[0,M_{\mathcal S}-1]_{\mathbb Z}$ be the solution.

    \IF{$a\ge D$}
        \STATE Continue to the next tuple.
    \ENDIF

    \IF{$a\bmod m_i\in\mathcal R_i$ for all $i\in\{1,\ldots,L\}\setminus\mathcal S$}
        \STATE Return
        $\left(
        \left\lfloor\frac{a}{m_1}\right\rfloor,\ldots,
        \left\lfloor\frac{a}{m_L}\right\rfloor
        \right)$.
    \ENDIF

\ENDFOR

\STATE Return $\emptyset$.
\end{algorithmic}
\end{algorithm}

\begin{thm}\label{thm:exact_decoder_correctness}
Assume that $\Delta(\boldsymbol{\tau},D)>0$. Suppose that the observed remainder
vector $\widetilde{\mathbf r}$ is produced from an integer
$N\in[0,D-1]_{\mathbb Z}$ in the sense that there exists
$\Delta\mathbf r=(\Delta r_1,\ldots,\Delta r_L)\in\mathcal E_{\boldsymbol{\tau}}$
such that
$\widetilde r_i=N\bmod m_i+\Delta r_i$, for
$i=1,\ldots,L$.
Then Algorithm~\ref{alg:exact_decoder} outputs the true folding integer vector
$\mathbf n(N)$.
\end{thm}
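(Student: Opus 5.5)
The proof has two parts. First, the algorithm returns \emph{some} folding integer vector; it never reaches the final ``Return $\emptyset$''. Second, whatever it returns equals $\mathbf n(N)$. Both rest on the observation sets of Section~\ref{s3_sub1} and on Theorem~\ref{thm:exact_fixed_D_tradeoff}, combined with Lemma~\ref{lem:observation_set_criterion}.

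\emph{Termination with a candidate.} Let $\rho_i^\star=N\bmod m_i$ for $i\in\mathcal S$. Since $\widetilde r_i=\rho_i^\star+\Delta r_i$ with $|\Delta r_i|\le\tau_i$, we have $\rho_i^\star\in[\widetilde r_i-\tau_i,\widetilde r_i+\tau_i]_{\mathbb Z}$. Also $\rho_i^\star\in[0,m_i-1]_{\mathbb Z}$, so $\rho_i^\star\in\mathcal R_i$. Hence the tuple $(\rho_i^\star)_{i\in\mathcal S}$ is enumerated in the loop.

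For this tuple the subsystem is solvable, since $N$ is a solution. By the general CRT its solution is unique modulo $M_{\mathcal S}$, so the computed $a\in[0,M_{\mathcal S}-1]_{\mathbb Z}$ satisfies $a\equiv N \pmod{M_{\mathcal S}}$. Because $0\le N\le D-1\le M_{\mathcal S}-1$, we get $a=N<D$, and the test $a\ge D$ is not triggered. Finally, $N\bmod m_i\in\mathcal R_i$ for every $i\notin\mathcal S$ by the same argument as above. So at the latest the algorithm returns at this tuple, and it therefore never outputs $\emptyset$.

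\emph{Correctness of the returned vector.} The algorithm may return earlier, at some other tuple producing a candidate $a$. Any returned $a$ satisfies $a\in[0,D-1]_{\mathbb Z}$ and $a\bmod m_i\in\mathcal R_i$ for all $i$:
\begin{itemize}
\item for $i\in\mathcal S$, because $a\equiv\rho_i \pmod{m_i}$ with $\rho_i\in\mathcal R_i\subseteq[0,m_i-1]_{\mathbb Z}$, so $a\bmod m_i=\rho_i$;
\item for $i\notin\mathcal S$, by the explicit check in the algorithm.
\end{itemize}
Setting $e_i^{(a)}=\widetilde r_i-(a\bmod m_i)$ gives $|e_i^{(a)}|\le\tau_i$. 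Hence $\widetilde{\mathbf r}=\mathbf r(a)+\mathbf e^{(a)}$ with $\mathbf e^{(a)}\in\mathcal E_{\boldsymbol\tau}$, that is, $\widetilde{\mathbf r}\in\mathcal O_{\boldsymbol\tau,D}(\mathbf n(a))$, where $a\in\mathcal I_D(\mathbf n(a))$. Likewise $\widetilde{\mathbf r}\in\mathcal O_{\boldsymbol\tau,D}(\mathbf n(N))$.

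Since $\Delta(\boldsymbol\tau,D)>0$, Theorem~\ref{thm:exact_fixed_D_tradeoff} together with Lemmas~\ref{lem:observation_set_criterion} and~\ref{lem:interval_overlap} gives pairwise disjointness of the observation sets for distinct vectors in $\mathcal F(D)$. So $\mathbf n(a)=\mathbf n(N)$, and the output $(\lfloor a/m_i\rfloor)_i=\mathbf n(a)$ is the true folding integer vector.

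The main point needing care is that the returned candidate may not be $N$ itself. That is exactly why disjointness, rather than uniqueness of $N$, is invoked in the second part. The uniqueness step $a=N$ in the first part also relies on the design condition $M_{\mathcal S}\ge D$.
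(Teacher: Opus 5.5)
Your proof is correct and follows the same two-step argument as the paper. First, the true remainders of $N$ on $\mathcal S$ lie in the sets $\mathcal R_i$, so with $M_{\mathcal S}\ge D$ the search reaches the candidate $a=N$ at the latest; second, any candidate returned earlier is feasible for $\widetilde{\mathbf r}$, so disjointness of the observation sets under $\Delta(\boldsymbol{\tau},D)>0$ forces $\mathbf n(a)=\mathbf n(N)$. You also spell out details the paper leaves implicit, namely that $a=N$ for the true tuple and that $a\bmod m_i=\rho_i$ for $i\in\mathcal S$.
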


We now discuss the complexity. Let $R_i=|\mathcal R_i|$. From
\eqref{eq:possible_true_remainder_set}, $R_i\le \min\{m_i,2\tau_i+1\}$ for
$i=1,\ldots,L$. For the selected subset $\mathcal S$, the number of remainder
tuples to be tested is at most $\prod_{i\in\mathcal S}R_i$. The algorithm may
stop earlier, since it stops once a verified candidate is found. Let $T$ be
the number of tested tuples before termination. Then
$T\le\prod_{i\in\mathcal S}R_i$. For each tested tuple, the CRT subsystem on
$\mathcal S$ is solved by successive CRT merges, and a candidate integer, if
obtained, is verified on the remaining indices. Hence, ignoring the bit
complexity of integer arithmetic, each tested tuple costs at most $O(L)$
operations, and the running time is $O(TL)$. In the worst case, the complexity
is bounded by
\begin{equation}\label{eq:decoder_worst_complexity_bound}
O\left(L\prod_{i\in\mathcal S}R_i\right).
\end{equation}
Compared with a direct search over all folding integer vectors in $\mathcal F(D)$, the proposed algorithm searches only over the possible remainders on the selected indices allowed by the observation $\widetilde{\mathbf r}$. This can be much lower when the remainder error bounds are small, because $R_i\le 2\tau_i+1$ and $\prod_{i\in\mathcal S}R_i$ can be much smaller than $|\mathcal F(D)|$, especially when $D$ is large.

\section{Special Cases and Connections}
\label{s5}

We now discuss several special cases. These cases help
interpret the general characterization and check its consistency with the
known CRT and robust CRT settings. We first consider the error-free case
and then specialize the
theory to uniform remainder error bounds, where the error bound vector has the
form $\boldsymbol{\tau}=\tau\mathbf 1$ and the general condition can be written in terms of the common error bound $\tau$.

\subsection{The Error-Free Case}

We first consider the case with no remainder errors, namely
$\boldsymbol{\tau}=\mathbf 0$. In this case, robust determination is the same as ordinary determination.
For consistency, we keep the term ``robust determination'' in the following studies.

\begin{cor}\label{cor:zero_error_full_range}
When $\boldsymbol{\tau}=\mathbf 0$, we have
$D(\mathbf 0)=M$.
\end{cor}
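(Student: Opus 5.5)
We plan to prove the corollary by showing that $\mathcal P_{\mathbf 0}=\emptyset$ and then invoking the first case of Theorem~\ref{thm:exact_D_tau}. (Alternatively, the same computation shows $\Delta(\mathbf 0,M)>0$, and Theorem~\ref{thm:exact_fixed_D_tradeoff} together with $D(\boldsymbol{\tau})\le M$ gives the claim.)

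Fix two distinct folding integer vectors $\mathbf n,\mathbf n'\in\mathcal F(M)$. With $\boldsymbol{\tau}=\mathbf 0$, the set $\mathcal W_{\mathbf 0}(\mathbf n,\mathbf n')$ in \eqref{eq:W_tau_def} is $\bigcap_{i}\{c_i(\mathbf n,\mathbf n')\}$. It is nonempty only if all $c_i(\mathbf n,\mathbf n')=(n_i-n_i')m_i$ equal a common value $s$. In that case $s$ is a common multiple of $m_1,\ldots,m_L$, hence a multiple of $M$.

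On the other hand, any $s\in\mathcal J_M(\mathbf n,\mathbf n')$ is a difference $N-N'$ with $N,N'\in[0,M-1]_{\mathbb Z}$, so $|s|\le M-1$. Thus $s=0$ is forced.

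Then $c_i=0$ for every $i$, so $n_i=n_i'$ for every $i$, contradicting $\mathbf n\ne\mathbf n'$. (Equivalently, $s=0$ would require $N=N'$, but the integer sets $\mathcal I_M(\mathbf n)$ and $\mathcal I_M(\mathbf n')$ are disjoint.) Therefore $\mathcal J_M\cap\mathcal W_{\mathbf 0}=\emptyset$, i.e.\ $\ell_{\mathbf 0,M}>u_{\mathbf 0,M}$, for every distinct pair, and so $\mathcal P_{\mathbf 0}=\emptyset$.

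The only step needing care is the divisibility claim that a common value of all the $(n_i-n_i')m_i$ must be a multiple of $M$. This follows directly from the definition of the lcm, so we expect no real obstacle. The argument is the classical CRT uniqueness reproduced in the observation-set language.
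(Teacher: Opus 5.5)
Your proof is correct, but it takes a different route from the paper. The paper does not use its own machinery here. It observes that $\mathcal E_{\mathbf 0}=\{\mathbf 0\}$, so the observed remainder vector equals the true one, and then cites the classical CRT.

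You instead verify the criterion of Theorem~\ref{thm:exact_D_tau} directly. You show $\mathcal J_M(\mathbf n,\mathbf n')\cap\mathcal W_{\mathbf 0}(\mathbf n,\mathbf n')=\emptyset$ for every distinct pair: a common value of all $(n_i-n_i')m_i$ must be a multiple of $M$, while $|s|\le M-1$. This is exactly the uniqueness half of the CRT, rederived inside the observation-set framework.

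The paper's version is shorter. Yours shows that the general theory itself, rather than an outside theorem, reproduces the classical dynamic range, which is the consistency claim the paper makes right after the corollary.

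One small point: your computation gives only the lower bound $D(\mathbf 0)\ge M$. The upper bound $D(\mathbf 0)\le M$ enters in two borrowed ways. In your main route it comes through the first case of Theorem~\ref{thm:exact_D_tau}, whose proof appeals to ``no range length larger than $M$ is possible even in the error-free case''. In your parenthetical route it comes through the paper's remark $D(\boldsymbol{\tau})\le M$, which the paper justifies by the zero-error case itself. To make your argument self-contained, add the one-line observation that $N=0$ and $N=M$ have the same remainder vector $\mathbf 0$ but distinct folding integer vectors $\mathbf 0$ and $(M/m_1,\ldots,M/m_L)$. Hence robust determination already fails on $[0,M]_{\mathbb Z}$.
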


\begin{proof}
When $\boldsymbol{\tau}=\mathbf 0$, the admissible error set is
$\mathcal E_{\mathbf 0}=\{\mathbf 0\}$. Hence the observed remainder vector is
exactly the true remainder vector. The proof then follows from the 
conventional CRT.
\end{proof}

Corollary~\ref{cor:zero_error_full_range} shows that the proposed tradeoff reduces to the classical CRT dynamic range when there is no remainder error.

\subsection{The Uniform Remainder Error Bound Case}
\label{s5_uniform}

We next consider the uniform remainder error bound case, where
$\boldsymbol{\tau}=\tau\mathbf 1$,
for some $\tau\in\mathbb Z_{\ge0}$.
The purpose of this subsection is to show that the proposed general 
framework reduces to the known uniform dynamic range--robustness tradeoff in
\cite[Th. 3 and Cor. 3]{XiaoHuangYeXiao18}, with the boundary modified by
the integer closed-bound error model used in this paper.

Let
\begin{equation}\label{eq:m_min_uniform_def}
m_{\min}=\min_{1\le i\le L}m_i.
\end{equation}
For $x\in[0,M-1]_{\mathbb Z}$, define
\begin{equation}\label{eq:psi_uniform_def}
\psi(x)
=
\max_{1\le i\le L}(x\bmod m_i).
\end{equation}
Thus, $\psi(x)$ is the largest remainder of $x$ among the moduli
$m_1,\ldots,m_L$. Then, it is not hard to derive the following corollary from Theorem~\ref{thm:exact_fixed_D_tradeoff}.

\begin{cor}
\label{cor:uniform_error_reduction}
Let $\boldsymbol{\tau}=\tau\mathbf 1$. For every $1\le D\le M$, the robust
determination on $[0,D-1]_{\mathbb Z}$ is possible under the integer error
bound $|\Delta r_i|\le\tau$ if and only if
\begin{equation}\label{eq:uniform_fixed_D_condition}
\psi(x)>4\tau
\quad
\text{for every }x\in[m_{\min},D-1]_{\mathbb Z}.
\end{equation}
\end{cor}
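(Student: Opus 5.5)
The plan is to go through Theorem~\ref{thm:exact_fixed_D_tradeoff} and Lemma~\ref{lem:interval_overlap}. With $\boldsymbol{\tau}=\tau\mathbf 1$, robust determination on $[0,D-1]_{\mathbb Z}$ fails if and only if there exist $N,N'\in[0,D-1]_{\mathbb Z}$ with $\mathbf n(N)\ne\mathbf n(N')$ and
\[
|r_i(N)-r_i(N')|\le 2\tau,\qquad i=1,\ldots,L.
\]
This is just \eqref{eq:coordinate_error_compatibility} together with the midpoint construction in the proof of Lemma~\ref{lem:interval_overlap}. So the corollary becomes: such an ambiguous pair exists if and only if some $x\in[m_{\min},D-1]_{\mathbb Z}$ has $\psi(x)\le 4\tau$. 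I will prove the two directions separately.

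\emph{An $x$ with small $\psi$ produces an ambiguity.} Suppose $x\in[m_{\min},D-1]_{\mathbb Z}$ and $\psi(x)\le4\tau$. Take $N=x$ and $N'=c$ with $c=\lfloor\psi(x)/2\rfloor$. Assume first that $c<m_{\min}$. Then $\mathbf n(N')=\mathbf 0$ and $r_i(N')=c$ for every $i$. Since $x\ge m_{\min}$, at least one folding integer of $x$ is positive, so $\mathbf n(N)\ne\mathbf 0$. For each $i$, $r_i(x)\in[0,\psi(x)]$, hence $|r_i(x)-c|\le\lceil\psi(x)/2\rceil\le 2\tau$. Both integers lie in $[0,D-1]_{\mathbb Z}$, so this is an ambiguity.

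The degenerate case $c\ge m_{\min}$ forces $2\tau\ge m_{\min}$. Then the pair $N'=m_{\min}-1$, $N=m_{\min}$ is already ambiguous:
\begin{itemize}
\item for indices with $m_i>m_{\min}$, the remainders differ by $1$;
\item for indices with $m_i=m_{\min}$, the remainders differ by $m_{\min}-1\le 2\tau$.
\end{itemize}
Both integers lie in $[0,D-1]_{\mathbb Z}$ because $m_{\min}\le x\le D-1$.

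\emph{An ambiguity produces an $x$ with small $\psi$.} Suppose $N,N'$ form an ambiguous pair. Put $s=N-N'$ and $d_i=r_i(N)-r_i(N')\in[-2\tau,2\tau]$. Then $s\equiv d_i \pmod{m_i}$ by \eqref{eq:remainder_difference_s}. Define
\[
x=s-\min_j d_j .
\]
For each $i$, $x\equiv d_i-\min_j d_j\pmod{m_i}$, and $0\le d_i-\min_j d_j\le 4\tau$.
\begin{enumerate}
\item \emph{Small $\psi$.} If $4\tau<m_i$ for all $i$, then $x\bmod m_i=d_i-\min_j d_j$, so $\psi(x)\le 4\tau$. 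If instead $4\tau\ge m_{\min}$, then $\psi(m_{\min})$ can exceed $4\tau$, so this case has to be handled separately. The first thing to try is to show it is compatible with the first direction's degenerate case, i.e.\ that it still yields some $x\in[m_{\min},D-1]_{\mathbb Z}$ with $\psi(x)\le 4\tau$ whenever $m_{\min}<D$.
\item \emph{Upper bound.} $-d_j=r_j(N')-r_j(N)\le r_j(N')\le N'$ for every $j$, so $x\le s+N'=N\le D-1$.
\item \emph{Lower bound.} Assume $0\le x<m_{\min}$. Then $x$ equals each of its remainders, so $d_i-\min_j d_j=x$ for all $i$. Hence all $d_i$ are equal to a common value $d$, and $c_i(\mathbf n,\mathbf n')=s-d$ is independent of $i$. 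A common multiple of all the $m_i$ that is smaller than $M$ in absolute value must be $0$, since $|s|<D\le M$. So $\mathbf n=\mathbf n'$, a contradiction. This gives $x\ge m_{\min}$, provided $x\ge0$.
\end{enumerate}

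The hard step will be the sign of $x$, together with the degenerate regime $4\tau\ge m_{\min}$ flagged in step 1. Ordering the pair so that $N>N'$ does not by itself guarantee $\min_j d_j\le s$. If $x<0$, I would try the symmetric choice with the roles of $N$ and $N'$ exchanged, namely $x'=-s-\min_j(-d_j)=\max_j d_j-s$. I would then argue that at least one of $x$ and $x'$ is nonnegative. A lower value of $\min_j d_j$ could be bought with a common shift of $N,N'$, but such a shift would have to respect the requirement $x\le D-1$, and I have not yet checked that it does. Verifying this sign argument carefully is the main obstacle.
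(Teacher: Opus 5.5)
Your ``only if'' direction is complete: from an $x\in[m_{\min},D-1]_{\mathbb Z}$ with $\psi(x)\le 4\tau$ you build an ambiguous pair, using either $(x,\lfloor\psi(x)/2\rfloor)$ or the fallback pair $(m_{\min},m_{\min}-1)$. The ``if'' direction is where the gap is. There you never prove $x\ge 0$, and hence never prove $x\ge m_{\min}$. You also assert that ordering the pair as $N>N'$ does not guarantee $\min_j d_j\le s$, and that assertion is false. The missing idea is monotonicity of the folding integers. If $N>N'$, then $n_j=\lfloor N/m_j\rfloor\ge\lfloor N'/m_j\rfloor=n_j'$ for every $j$. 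So $c_j(\mathbf n,\mathbf n')\ge 0$, that is, $d_j=s-c_j\le s$ for all $j$. Your candidate is then exactly $x=s-\min_j d_j=\max_j c_j(\mathbf n,\mathbf n')=\max_j (n_j-n_j')m_j$. All $n_j-n_j'$ are $\ge 0$ and at least one is $\ge 1$, so $x\ge m_{\min}$ immediately. Also $x\le\max_j n_jm_j\le N\le D-1$. The symmetric candidate $x'$ and the common-shift idea are therefore unnecessary, and your step 3 can be dropped. Its appeal to $|s|<M$ also bounds the wrong quantity: the common multiple there is $s-d=c_i$, and in any case $x=0$ already gives $s=d$ and hence $c_i=0$ directly.

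The other item you left open, the regime $4\tau\ge m_{\min}$ in step 1, is not a real obstacle. Your remark that $\psi(m_{\min})$ can exceed $4\tau$ there is backwards, since $\psi(m_{\min})\le m_{\min}\le 4\tau$. More to the point, you never need $x\bmod m_i=d_i-\min_j d_j$; you only need $x\bmod m_i\le d_i-\min_j d_j$. This holds for every $i$ with no assumption on $4\tau$ versus $m_i$. Indeed, $y_i=d_i-\min_j d_j$ is a nonnegative integer congruent to $x$ modulo $m_i$, and the least nonnegative residue of a nonnegative integer never exceeds it. Hence $\psi(x)\le 4\tau$ without any case split. With these two observations your construction closes the converse, but as written the proposal does not prove the ``if'' direction of the corollary. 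The paper gives no proof beyond saying the result follows from Theorem~\ref{thm:exact_fixed_D_tradeoff}, and your reduction through Lemma~\ref{lem:interval_overlap} is the natural way to carry that out.
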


Corollary~\ref{cor:uniform_error_reduction} is the integer error counterpart
of the uniform tradeoff in \cite[Th. 3 and Cor. 3]{XiaoHuangYeXiao18}. In
\cite{XiaoHuangYeXiao18}, the dynamic range is written as an endpoint $K$ of
the interval $[0,K]$, while this paper uses the range length $D$ of
$[0,D-1]_{\mathbb Z}$. Thus the correspondence is $K=D-1$.

The main difference is the boundary condition. The result in
\cite{XiaoHuangYeXiao18} is stated for a real-valued remainder error bound $\delta$ with
strict error bounds $|\Delta r_i|<\delta$, which leads to the condition
$\psi(x)\ge 4\delta$
over the considered range. In this paper, the remainder errors are integers and
the error model uses the closed bounds $|\Delta r_i|\le\tau$. Therefore, when
$\psi(x)=4\tau$, two different folding integer vectors can generate
the same erroneous remainder vector with admissible integer errors. Hence the
correct condition under the present model is the strict inequality
$\psi(x)>4\tau$.

For the two-modulus case, i.e., $L=2$, Corollary~\ref{cor:uniform_error_reduction} further
reduces to the integer error counterpart of the two-modulus dynamic
range--robustness tradeoff obtained in \cite{XiaoXiaHuo17}.

\section{Numerical Results}
\label{s6}

In this section, we illustrate the derived exact tradeoff from three
perspectives. First, we verify the guaranteed recovery property of the proposed decoding
algorithm for folding integer vectors. Within the dynamic range, the folding integer vector is correctly recovered
for every integer and all admissible remainder errors.
Second, we show that pairwise co-prime moduli can
still provide robustness when the dynamic range is reduced, even when every
remainder may contain a bounded error. 
Third, in addition to the two examples in Section~\ref{s3}, we further
demonstrate why the full remainder error bound vector is necessary. Replacing a
non-uniform error bound vector by a single uniform bound can severely
underestimate the achievable dynamic range.

\subsection{Guaranteed Recovery and Searching Complexity of the Decoding Algorithm}

We first test the decoding algorithm in Section~\ref{s4}. We use
$(m_1,m_2,m_3)=(65,120,162)$ and
$\boldsymbol{\tau}=(1,2,1)$.
Theorem~\ref{thm:exact_D_tau} gives $D(\boldsymbol{\tau})=10206$.
For each tested range length $D$, we randomly generate integers
$N\in[0,D-1]_{\mathbb Z}$ and integer remainder errors satisfying
$|\Delta r_i|\le\tau_i$, $i=1,2,3$. We then apply Algorithm~\ref{alg:exact_decoder} and record the probability of recovering the correct folding integer vector. We also record the number $T$ of remainder tuples tested before the algorithm terminates.

\begin{figure}[htbp]
\centering
\includegraphics[width=0.3\textwidth]{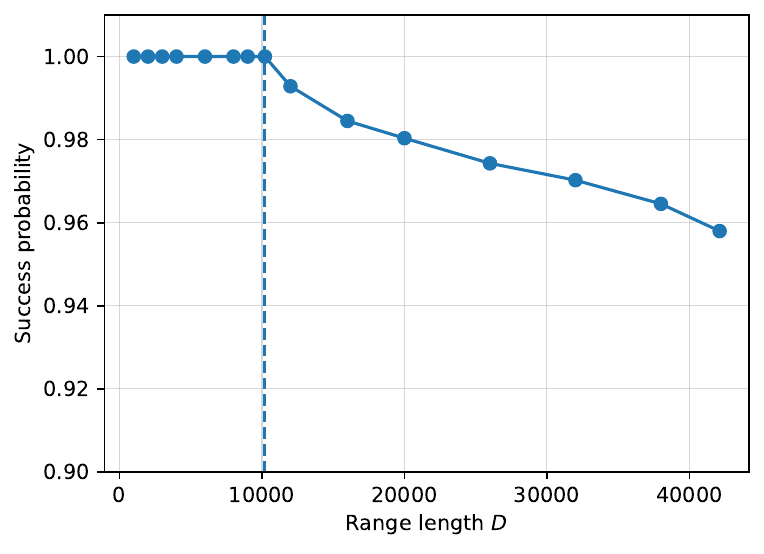}
\caption{Probability of correct folding integer vector recovery by Algorithm~\ref{alg:exact_decoder} for $(m_1,m_2,m_3)=(65,120,162)$ and $\boldsymbol{\tau}=(1,2,1)$. The vertical dotted line marks the dynamic range $D(\boldsymbol{\tau})=10206$.}
\label{fig:decoder_success_vs_D}
\end{figure}

Fig.~\ref{fig:decoder_success_vs_D} verifies the guaranteed recovery within the dynamic range. When $D\le D(\boldsymbol{\tau})=10206$, Theorem~\ref{thm:exact_fixed_D_tradeoff} guarantees correct folding integer vector recovery for every integer in $[0,D-1]_{\mathbb Z}$ and every admissible remainder error vector. The simulation agrees with this guarantee. When $D>D(\boldsymbol{\tau})$, the guarantee no longer holds, and the empirical success probability can drop below one.

\begin{figure}[htbp]
\centering
\includegraphics[width=0.3\textwidth]{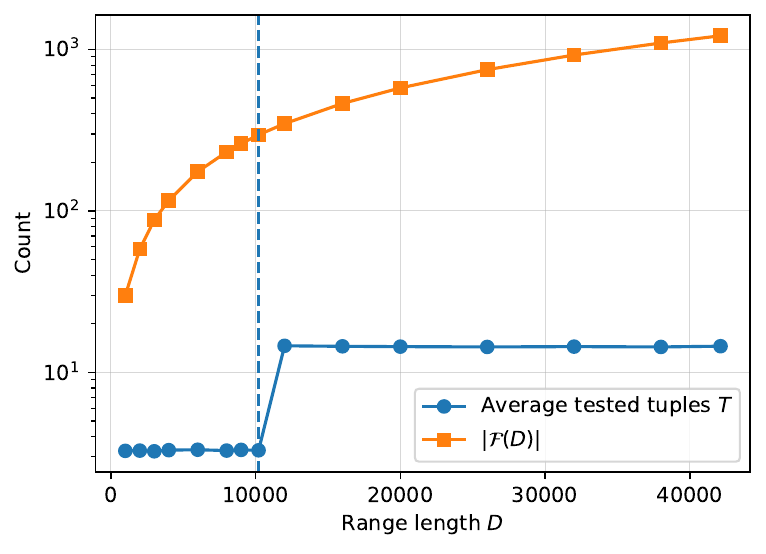}
\caption{Average number of tested remainder tuples $T$ in Algorithm~\ref{alg:exact_decoder}, compared with the number of folding integer vectors $|\mathcal F(D)|$. The vertical dotted line marks $D(\boldsymbol{\tau})=10206$.}
\label{fig:decoder_cost_vs_D}
\end{figure}

Fig.~\ref{fig:decoder_cost_vs_D} shows the searching complexity of the decoding algorithm. Compared with a direct search over all folding integer vectors in $\mathcal F(D)$, the proposed algorithm searches only over the possible true remainders on the selected indices allowed by the observation $\widetilde{\mathbf r}$. At $D=D(\boldsymbol{\tau})$, we have $|\mathcal F(D)|=296$, while the average number of tested remainder tuples is much smaller in the simulation. This is because the algorithm can choose $\mathcal S=\{1,3\}$, for which $\lcm(65,162)=10530\ge D(\boldsymbol{\tau})$ and $R_1R_3\le 9$.

\subsection{Robustness with Pairwise Co-Prime Moduli}

We next give an example showing that robustness can still be obtained
with pairwise co-prime moduli after reducing the dynamic range. Let
$(m_1,m_2,m_3,m_4)=(7,11,17,19)$ and $\boldsymbol{\tau}=(2,1,1,1)$.
The moduli are pairwise co-prime and
$M=7\cdot 11\cdot 17\cdot 19=24871$. By
Theorem~\ref{thm:exact_D_tau}, we obtain
$D(\boldsymbol{\tau})=136$.
Thus, every integer in $[0,135]_{\mathbb Z}$ has its folding integer vector
robustly determined, even though all four remainders may be erroneous.

\subsection{Dynamic Range Loss under Uniform Remainder Error Bound Replacement}

Finally, we show why it is important to keep the full vector
$\boldsymbol{\tau}$ instead of replacing it by a scalar uniform bound. We
again use
$(m_1,m_2,m_3)=(6,9,15)$
and test all remainder error bound vectors in $[0,5]_{\mathbb Z}^3$.

\begin{figure}[htbp]
\centering
\includegraphics[width=0.3\textwidth]{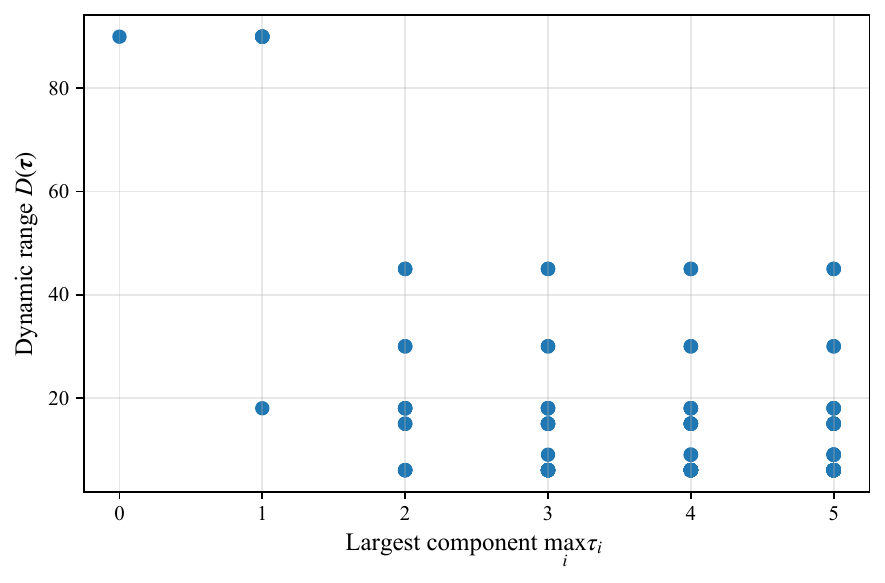}
\caption{Dynamic range $D(\boldsymbol{\tau})$ versus the largest error
bound component $\max_i\tau_i$ for $(m_1,m_2,m_3)=(6,9,15)$ and
$\boldsymbol{\tau}\in[0,5]_{\mathbb Z}^3$. Each dot represents one remainder error bound vector
$\boldsymbol{\tau}\in[0,5]_{\mathbb Z}^3$.}
\label{fig:nonuniform_dynamic_range_scatter}
\end{figure}

Fig.~\ref{fig:nonuniform_dynamic_range_scatter} plots the exact dynamic
range $D(\boldsymbol{\tau})$ against that with the uniform remainder error bound of the largest component $\max_i\tau_i$.
Remainder error bound vectors with the same value of $\max_i\tau_i$ can have very
different dynamic ranges. Therefore, the dynamic range depends not only
on the largest remainder error bound, but also on how the remainder error bounds are
distributed across the remainder channels.

To quantify the loss caused by replacing $\boldsymbol{\tau}$ with the uniform
vector $(\max_i\tau_i)\mathbf 1$, define
\[
G(\boldsymbol{\tau})
=
D(\boldsymbol{\tau})
-
D\bigl((\max_i\tau_i)\mathbf 1\bigr).
\]
The quantity $G(\boldsymbol{\tau})$ measures the loss in dynamic range caused
by replacing the actual non-uniform remainder error bound vector with the uniform bound equal to its largest component. This loss reflects the price of ignoring the remainder
channel-dependent error bounds. 

\begin{figure}[htbp]
\centering
\includegraphics[width=0.3\textwidth]{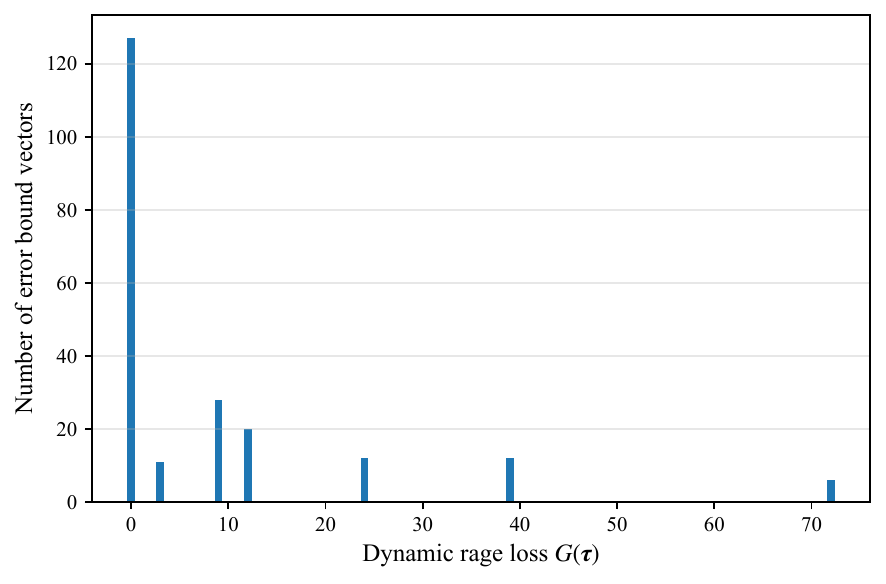}
\caption{Distribution of the dynamic range loss
$G(\boldsymbol{\tau})$ over all remainder error bound vectors
$\boldsymbol{\tau}\in[0,5]_{\mathbb Z}^3$.}
\label{fig:nonuniform_gain_hist}
\end{figure}

Fig.~\ref{fig:nonuniform_gain_hist} shows the distribution of $G(\boldsymbol{\tau})$ over all tested remainder error bound vectors. 
The loss is not uniform across the remainder error bound space. 
Many vectors have $G(\boldsymbol{\tau})=0$, which means that replacing them by $(\max_i\tau_i)\mathbf 1$ does not reduce the dynamic range.
However, for some vectors, the loss can be large. In the tested example, the largest observed loss is $72$.
For instance, every non-uniform remainder error bound vector in $\{0,1\}^3$ has full dynamic range $90$, while with the uniform replacement as $(1,1,1)$, the dynamic range is significantly reduced to $18$. 

These results show that the robustness under non-uniform remainder errors is coupled across remainder channels. 
For a fixed dynamic range, the tolerable remainder error bound in one channel may depend on the error bounds in the other channels. 
Therefore, the full remainder error bound vector is needed to describe the exact dynamic range--robustness tradeoff.

\section{Conclusion}
\label{s7}

This paper studied the exact dynamic range--robustness tradeoff for CRT when the remainder error bound is a vector rather than a single scalar.
We viewed robust CRT as an exact folding integer vector decoding problem and converted the robust determination into a disjointness problem between observation sets of erroneous remainders.
Based on this viewpoint, we derived a scalar necessary and sufficient condition for robust determination at any fixed range length and remainder error bound vector.
We then characterized the dynamic range for a given remainder error bound vector and the tolerable remainder error bound vectors for a given range length.
We also gave an exact folding integer vector decoding algorithm.
The special cases of error-free and uniform error bounds coincide with the known CRT and robust CRT results, respectively.
Numerical results showed that the dynamic range depends on the full remainder error bound vector. 
They also showed that remainder channels are coupled in the robustness tradeoff, so a single uniform remainder error bound cannot fully describe the effect of non-uniform remainder errors.
Therefore, the proposed characterization can be used as a design tool for deciding how much error each remainder channel may tolerate and how the moduli should be selected when different channels have different noise levels or reliability requirements.

\end{document}